\newtheorem{theorem}{Theorem}[section]
\newtheorem{definition}{Definition}[section]
\newtheorem{corollary}{Corollary}[section]
\newtheorem{proposition}{Proposition}[section]
\theoremstyle{definition}
\newtheorem{remark}{Remark}[section]
\newcommand{\be}{\begin{equation}}
\newcommand{\ee}{\end{equation}}
\newcommand{\bea}{\begin{eqnarray}}
\newcommand{\eea}{\end{eqnarray}}
\newcommand{\beb}{\begin{eqnarray*}}
\newcommand{\eeb}{\end{eqnarray*}}
\numberwithin{equation}{section}
\begin{document}
%
\title{Curvature  Inheritance Symmetry on M-Projectively Flat Spacetimes}
\author[A. A. Shaikh, M. Ali, M. Salman and F. \"{O}. Zengin]{Absos Ali Shaikh$^{1}$, Musavvir Ali$^{2}$,	Mohammad Salman$^{3}$  and  F\"{u}sun \"{O}zen   Zengin$^{4}$}
\date{\today}
\address{\noindent\newline$^{1}$ Department of Mathematics,
	\newline University of Burdwan, 
	\newline Golapbag, Burdwan-713104,
	\newline West Bengal, India} 
\email{aask2003@yahoo.co.in$^1$, aashaikh@math.buruniv.ac.in$^1$}

\address{\noindent\newline$^{2,3}$ Department of Mathematics,
	\newline Aligarh Muslim University, 
	\newline Aligarh-202002, India} 
\email{musavvirali.maths@amu.ac.in$^2$}
\email{salman199114@gmail.com$^3$}

\address{\noindent\newline$^{4}$ Department of Mathematics,
	\newline Istanbul Technical University,  
	\newline 34469-Istanbul, Turkey.}
\email{fozen@itu.edu.tr$^4$}

%
%
\dedicatory{}
\begin{abstract}
	The paper aims to investigate curvature inheritance symmetry in M-projectively flat spacetimes. It is shown that the curvature inheritance symmetry in M-projectively flat spacetime is a conformal motion. We have proved that M- projective curvature tensor follows the symmetry inheritance property along a vector field $\xi$,  when spacetime  admits the conditions of both curvature inheritance symmetry  and conformal motion or motion along the vector field  $\xi$.  Also, we have derived some results for M-projectively flat spacetime with perfect fluid following the  Einstein field equations with a cosmological term and admitting  the curvature inheritance symmetry along the vector field $\xi$. We have shown that an M-projectively flat perfect fluid spacetime  obeying  the Einstein field equations with a cosmological term and admitting the curvature inheritance symmetry along a vector field $\xi$ is either a vacuum or satisfies the vacuum-like equation of state.	We have also shown that such spacetimes with the energy momentum tensor of an electromagnetic field distribution do not admit any curvature symmetry of general relativity. Finally, an example of M-projectively flat spacetime has been exhibited.
\end{abstract}
%
\subjclass[2020]{83C20; 83C05; 53B20}
\keywords{M-projective curvature tensor, flat spacetime,  curvature  inheritance, conformal motion,  Einstein field equations, perfect fluid spacetime.}
\maketitle
%

\section{\bf Introduction}
A pseudo-Riemannian manifold with a non-zero $ (0,2) $ type  Ricci tensor $R_{ij}$ is known as Einstein manifold if $R_{ij}$  is proportional to the metric tensor $  g_{ij}$.  In  other words, we can simply say that Einstein manifolds  form a natural subclass of the manifold of constant curvature \cite{Besse}.  In the field of general relativity \cite{Eishart} and  Riemannian geometry,  Einstein   manifolds  play a significant role. A spacetime is considered as a four dimensional connected  semi-Riemannian manifold $(V_4,  g)$ with Lorentzian metric $g$  of   signature $(-, +,+,+)$. Due to the casual character of vectors, the  Lorentzian manifolds \cite{Besse} are the  practical option for  the studying   general theory of relativity.

With  metric tensor $g_{ij}$ and  Riemannian connection $\nabla$,  let $(V_4, g)$ be a 4-dimensional differentiable manifold of class $C^\infty$.    M-projective curvature tensor of the manifold  $V_4$ was defined by  Pokhariyal and  Mishra \cite{GPPokh} in 1971 which can be written in the following form
\begin{equation}\label{1.1}
	\bar{W}^{h}_{ijk}= R^h_{ijk} + \frac{1}{6}(\delta^h_j R_{ik}-\delta^h_k R_{ij}+ R^h_j g_{ik}- R^h_k g_{ij}), 
\end{equation}
where 	$ R^h_{ijk}$ is the Riemannian curvature, $g_{ij}$ is the metric  and $ R_{ij}$ is  Ricci tensor in  $V_4$,  the tensor field $\bar{W}^{h}_{ijk}$ is  the M-projective curvature tensor. We note that the M-projective curvature tensor is a generalized curvature tensor, i.e., it satisfies the symmetry and skew-symmetry properties like Riemann curvature and satisfies Bianchi identity.  If $	\bar{W}^{h}_{ijk}=0$ on $V_4$ identically, then that manifold is known as M-projectively flat manifold. We mention that any M-projectively flat semi-Riemannian manifold is an Einstein manifold and a space of constant curvature. Many authors have studied about spacetimes and their geometries and physics such as: Abdussattar and Dwivedi \cite{Abdussattar and B Dwivedi} studied conharmonic symmetries in fluid spacetimes,  Zengin \cite{Zengin1} studied M-Projectively flat spacetimes, pseudo Z-symmetric spacetimes by Mantica and Suh \cite{Mantica1},  concircular curvature tensor and fluid spacetimes by Ahsan et al. \cite{Ahsan,ZAhsan1,ZAhsan2,ZAhsan3,AliM,Zahsan12} and many more. The properties of M-projective curvature tensor were studied by Ojha \cite{RH2,RH1} and M-projective curvature tensor on various structures has been studied by several authors (see, \cite{SKRH,Shyam,Shaikh2003,Shanmukha,Venkatesha,Zengin2}).

The present  paper is dedicated to certain investigations in general theory of  relativity for  M-projectively flat case. We have explored the condition when energy momentum tensor follow  symmetry inheritance property and its Lie derivative  vanishes in M-projectively flat spacetimes.

The Bianchi identities describe the interaction between the matter and free gravitational parts of the gravitational field, which is characterized by the  curvature tensor in general theory of relativity. In gravitational physics, the main objective of all investigations is the construction of the gravitational potential, which satisfies the Einstein field equations (EFEs)  with a  cosmological constant. Consider the  EFEs with cosmological term  as
\begin{equation}\label{1.2}
	R_{ij} - \frac{R}{2} g_{ij} + \wedge g_{ij} =  k ~  T_{ij},
\end{equation}

\noindent  where  $ g_{ij}$,  $R_{ij}$, $T _{ij}$, $\wedge$ and  $k$ are  respectively denote the  metric,  Ricci tensor, energy-momentum tensor,  cosmological term and  the gravitational constant. The   energy-momentum tensor   for a perfect fluid is defined by the following
\begin{equation}\label{1.3n}
	T_{ij}=(\mu + p) u_{i}u_{j}+ p g_{ij},
\end{equation} 
here  $\mu$,  the energy density, $p$  the isotropic pressure and $u^{i}$ is the velocity vector field   of the flow satisfying  $g_{ij} u^{j} = u_{i}$  for all  $i$, also $u_{i}u^{i}= -1$.

Geometrical symmetries of the spacetime are represented  by the following equation \cite{yanoK}
\begin{equation}\label{1.3}
	\pounds_\xi  A =2  \varOmega A
\end{equation}
where $ \pounds_\xi $ stands for the Lie derivative along the vector field $\xi^{i}$. The vector field $\xi^{i}$ may be timelike ($\xi^{i}\xi_{i}<0$), spacelike ($\xi^{i}\xi_{i}>0$) or null ($\xi^{i}\xi_{i}=0$),  `A'   denotes a geometrical/physical quantity and $\varOmega$  is a scalar function.

A simple example can be provided  as the metric inheritance symmetry in particular for $g_{ij} $, it is   conformal motion    (Conf M) \cite{Katzin} along a vector  field  $\xi^{i}$, so   
\begin{equation}\label{1.4}
	\pounds_\xi g_{ij}   =  2 \psi  g_{ij}, 
\end{equation} 

\noindent where $\psi$ is a conformal function,  equation \eqref{1.4} implies  motion/isometry for  $\psi =  0$. In this case $\xi$ is called a Killing vector \cite{yanoK}.   More than 30 geometric symmetries have been found in literature till  date. For detailed study of symmetry inheritance, see  \cite{Ahsan,klduggal2,Duggal,K.L.DuggalandR.Sharma,hall}.

In 1992, Duggal  \cite{Duggal} introduced the concept of curvature inheritance (CI) symmetry.  Curvature inheritance is the generalization of curvature collineation (CC),   which was  defined  by Katzin in 1969 \cite{Katzin}.  
\begin{definition}
	\cite{Duggal} A spacetime $(V_4, ~g)$ admits curvature inheritance symmetry along a 	 smooth vector field $\xi^{i}$,   if it satisfies  
	
	\begin{equation} \label{1.5N}
		\pounds_\xi R^h_{ijk} = 2 \varOmega R^h_{ijk}, 
	\end{equation}
	where $\varOmega = \varOmega(x^{i})$ is known as inheriting factor or inheritance function.
\end{definition}

\noindent If  $\varOmega = 0$, then $ 	\pounds_\xi R^h_{ijk} = 0$ and $\xi^{i}$ is said to follow a curvature symmetry  on $V_4$ or simply to write it generate a curvature collineation.
The CI equation \eqref{1.5N}	 can be written in component form as
\begin{equation}\label{1.6}
	R ^h_{ijk; l} \xi^l - R^l _{ijk} \xi^h_{;l} + R^h_{ljk} \xi ^l_{;i} +R^h_{ilk} \xi ^l_{;j} +R^h_{ijl} \xi ^l_{;k} = 2 \varOmega  R^h_{ijk}.
\end{equation}

\begin{definition}
	\cite{Duggal}	A spacetime $(V_4, ~g)$ admits Ricci inheritance (RI) symmetry along a 	 smooth vector field $\xi^{i}$,   if it satisfies the following equation 
	
	\begin{equation} \label{1.7N}
		\pounds_\xi R_{ij} = 2 \varOmega R_{ij}.
	\end{equation}
	
\end{definition}
\noindent Contraction of  \eqref{1.5N} gives  \eqref{1.7N}. Thus, in general, every curvature inheritance is  Ricci inheritance symmetry (i.e., CI $\Rightarrow$ RI), but the converse may not hold. Also,  RI reduces to RC when $\varOmega =0$ and  for $\varOmega \neq 0$, it is called as proper RI.

The literature on study of spacetimes    exhibits, a deep interest towards the  research of  different symmetries (in particular, curvature,  Ricci, projective,  matter, semi-conformal symmetry, conharmonic curvature inheritance). These geometrical symmetries are much helpful for obtaining exact solutions of Einstein field equations \eqref{1.2}. Such work of researches  \cite{klduggal2,KarmerD} motivates  us to inquire   the curvature  inheritance symmetry  in the spacetime,  admitting   M-projective curvature  tensor. The plan of this  paper is as follows:

After Section 1, the  introduction and preliminaries, we study curvature inheritance in an M-projectively flat spacetime. Section 3 deals with  M-projectively  flat perfect fluid   spacetime admitting curvature inheritance symmetry and examination of  some properties of such a spacetime. Equation of state is derived in Section 4 which is physically  significant  in cosmology. Finally,  we obtained some interesting results   considering a  purely electromagnetic distribution. In the last section, we provide an example in the form of a metric which is M-projectively flat.

\section{\bf Curvature Inheritance  in  M- Projectively Flat Spacetimes}

Consider an M-projectively flat Lorentzian manifold $(V_4, g)$, therefore \eqref{1.1} under condition   $\bar{W}^{h}_{ijk}$ = 0, implies
\begin{equation}\label{2.1}
	R^h_{ijk} 	= \frac{1}{6}( \delta^h_k R_{ij} - \delta^h_j R_{ik} + R^h_k g_{ij} - R^h_j g_{ik}).
\end{equation}

\noindent Contracting with respect to $h$ and $k$, we get
\begin{equation}\label{2.2}
	R_{ij} = \frac{R}{4} g_{ij}, 
\end{equation}	

\noindent here $ R$ is  scalar curvature. Thus,  we can state that  \textquotedblleft An M-projectively flat spacetime is an Einstein manifold.\textquotedblright~  From equations \eqref{2.1} and \eqref{2.2}, we have   
\begin{equation}\label{2.3}
	R^h_{ijk} 	= \frac{R}{12}(  g^h_k g_{ij} - g^h_j g_{ik}).
\end{equation}
Thus,  a result  by Zengin \cite{Zengin1}  is  mentioned  as  \textquotedblleft  An M-projectively flat spacetime is of constant curvature\textquotedblright.  Also,   M-projectively flat Lorentzian manifold is Ricci symmetric, i.e. $ \nabla_{k}  R_{ij} = 0$. 

For $(V_4,g)$ admitting Ricci inheritance,  \eqref{1.7N}   and  \eqref{2.2} lead to $\pounds_\xi g_{ij} = 2\varOmega g_{ij}.$ Thus, we have the following:

\begin{theorem}\label{4T2.1}
	Every Ricci inheritance in an M-projectively flat spacetime is a conformal motion.
\end{theorem}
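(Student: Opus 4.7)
The plan is to chase the hypotheses through the Einstein identity \eqref{2.2} and read off a conformal factor. Since the excerpt already establishes that an M-projectively flat spacetime $(V_4,g)$ is Einstein, namely $R_{ij}=\tfrac{R}{4}g_{ij}$, the natural move is to apply $\pounds_\xi$ to both sides of this identity and then substitute the Ricci inheritance hypothesis $\pounds_\xi R_{ij}=2\varOmega R_{ij}$ from \eqref{1.7N}.

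First I would compute, using the Leibniz rule for the Lie derivative,
\begin{equation*}
\pounds_\xi R_{ij}=\pounds_\xi\!\left(\tfrac{R}{4}g_{ij}\right)=\tfrac{1}{4}\bigl(\xi(R)\bigr)g_{ij}+\tfrac{R}{4}\,\pounds_\xi g_{ij}.
\end{equation*}
On the other hand, the Ricci inheritance condition together with \eqref{2.2} gives $\pounds_\xi R_{ij}=2\varOmega R_{ij}=\tfrac{R\varOmega}{2}g_{ij}$. Equating the two expressions and solving for $\pounds_\xi g_{ij}$ yields
\begin{equation*}
\pounds_\xi g_{ij}=\left(2\varOmega-\tfrac{\xi(R)}{R}\right)g_{ij},
\end{equation*}
which is already of the form \eqref{1.4} with conformal function $\psi=\varOmega-\tfrac{\xi(R)}{2R}$, and therefore exhibits a conformal motion.

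To tighten the statement, I would invoke the fact, noted immediately after \eqref{2.3}, that an M-projectively flat Lorentzian manifold is Ricci symmetric, $\nabla_k R_{ij}=0$. Contraction with $g^{ij}$ forces $\nabla_k R=0$, so $R$ is constant on $V_4$ and hence $\xi(R)=\xi^k\nabla_k R=0$. Consequently the formula above collapses to $\pounds_\xi g_{ij}=2\varOmega g_{ij}$, which is precisely a conformal motion with conformal factor $\varOmega$ (the very inheritance function). There is no genuine obstacle here; the only point that needs a line of justification is the constancy of $R$, which follows immediately from the Einstein property in dimension four together with Ricci symmetry (or equivalently the second Bianchi identity), and this is already recorded in the preceding discussion of the paper.
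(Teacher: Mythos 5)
Your proposal is correct and follows essentially the same route as the paper, which simply Lie-differentiates the Einstein identity $R_{ij}=\tfrac{R}{4}g_{ij}$ against the Ricci inheritance hypothesis (your extra care with the $\xi(R)$ term, disposed of via constancy of $R$, is exactly what the paper leaves implicit, and mirrors the computation the paper does display in the two-dimensional case \eqref{2.5N}). The only unstated caveat, common to both arguments, is that cancelling the factor $R/4$ requires $R\neq 0$, consistent with Duggal's formulation for Einstein spaces with $R\neq 0$.
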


Since,   every  M-projectively  flat spacetime is Einstein \cite{ZAhsan1},  thus from  Theorem \ref{4T2.1}, we can state the following:

\begin{corollary}\label{2.1.1}
	Every Ricci inheritance in an Einstein manifold is a conformal motion.
\end{corollary}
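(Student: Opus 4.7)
The plan is to observe that the proof of Theorem~\ref{4T2.1} never really uses M-projective flatness in its full strength; it uses it only to extract the Einstein condition \eqref{2.2}. So the same argument should go through verbatim whenever the manifold is known to be Einstein, and the corollary is essentially a restatement of what the proof of Theorem~\ref{4T2.1} already establishes.

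Concretely, first I would write the Einstein condition in the form $R_{ij} = \lambda\, g_{ij}$, where $\lambda = R/n$ is a scalar and, in dimension $n\geq 3$, a constant by the contracted second Bianchi identity. Next, I would apply $\pounds_\xi$ to both sides of this proportionality: because Lie differentiation is a derivation and $\lambda$ is constant, this yields $\pounds_\xi R_{ij} = \lambda\, \pounds_\xi g_{ij}$. Substituting the Ricci inheritance hypothesis \eqref{1.7N} on the left and $R_{ij} = \lambda g_{ij}$ on the right of $2\varOmega R_{ij}$, one obtains
\begin{equation*}
\lambda\, \pounds_\xi g_{ij} = 2\varOmega\, \lambda\, g_{ij}.
\end{equation*}
Cancelling $\lambda$, assumed nonzero, gives exactly the conformal motion equation \eqref{1.4} with conformal factor $\psi = \varOmega$, so $\xi$ is a conformal motion as claimed.

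The only real obstacle, and the step I would be most careful about, is the borderline case $\lambda = 0$ of a Ricci-flat Einstein manifold. There $R_{ij}\equiv 0$, so the Ricci inheritance equation \eqref{1.7N} is automatically satisfied by every vector field $\xi$, and the conclusion $\pounds_\xi g_{ij}=2\varOmega g_{ij}$ cannot be forced. I would therefore either read the corollary as tacitly excluding this degenerate case, or add a short remark that in the Ricci-flat situation every $\xi$ is trivially a Ricci inheritance and the statement must be interpreted accordingly. Beyond this caveat, the substantive content of the corollary is the three-line cancellation above, which exactly parallels the argument used for Theorem~\ref{4T2.1} and requires no additional geometric input.
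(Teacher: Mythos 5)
Your proof is correct and takes essentially the same route as the paper: the paper deduces the corollary by noting that the proof of Theorem~\ref{4T2.1} uses M-projective flatness only through the Einstein condition \eqref{2.2}, exactly as you observe. You are in fact more careful than the paper on the two points it leaves implicit — that $\lambda=R/n$ must be constant (so that $\pounds_\xi \lambda=0$), which the contracted Bianchi identity supplies in dimension $n\geq 3$, and that the cancellation requires $R\neq 0$; the paper only acknowledges the latter restriction later, when quoting Duggal's result for Einstein spaces with $R\neq 0$.
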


We know that a  space with non-zero constant curvature  or every harmonic space  is  Einstein. Then it yields the following:

\begin{corollary}
	Every Ricci inheritance in a space of constant curvature is a conformal motion.
\end{corollary}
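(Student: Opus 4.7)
The plan is to treat this corollary as an immediate specialization of Corollary \ref{2.1.1}. As the remark preceding the statement notes, a space of non-zero constant sectional curvature $K$ has Ricci tensor $R_{ij} = (n-1)K\,g_{ij}$, which under the paper's definition of Einstein manifold (Ricci tensor nonzero and proportional to the metric) makes it an Einstein manifold. Hence any Ricci inheritance on such a space is, by Corollary \ref{2.1.1}, a conformal motion.

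For completeness I would also include a one-line direct derivation that does not invoke Corollary \ref{2.1.1}. First I would substitute $R_{ij} = c\,g_{ij}$ (with $c = (n-1)K \neq 0$) into the Ricci inheritance equation \eqref{1.7N}:
\begin{equation*}
\pounds_\xi (c\, g_{ij}) \;=\; 2\varOmega\, (c\, g_{ij}).
\end{equation*}
Since $c$ is a constant it commutes with the Lie derivative, and since $c \neq 0$ it can be cancelled from both sides, yielding $\pounds_\xi g_{ij} = 2\varOmega\, g_{ij}$. Comparing with \eqref{1.4} shows that $\xi$ is a conformal motion with conformal scalar $\psi = \varOmega$.

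There is no real obstacle here; the only subtlety worth flagging is the non-zero curvature hypothesis, which is needed precisely so that Ricci non-vanishing lets us cancel $c$ (or, equivalently, so that the space qualifies as Einstein under the definition given in the introduction). In the flat case $R_{ij} \equiv 0$, the Ricci inheritance condition is vacuous and gives no information about $\pounds_\xi g_{ij}$, which is why the statement implicitly restricts to the non-trivial case covered by the preceding remark.
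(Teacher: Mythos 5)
Your proposal is correct and follows essentially the same route as the paper: the paper derives this corollary from the remark that a space of non-zero constant curvature is Einstein, so that Corollary \ref{2.1.1} (itself obtained by substituting the Einstein condition into the Ricci inheritance equation and cancelling the non-zero scalar) applies directly. Your supplementary direct computation and your remark on the necessity of $R \neq 0$ are consistent with, and merely make explicit, what the paper leaves implicit.
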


\begin{corollary}\label{2.1.3}
	Every Ricci inheritance in a harmonic space is a conformal motion.
\end{corollary}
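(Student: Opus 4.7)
The plan is to reduce this corollary to the already-established Corollary \ref{2.1.1}, which asserts that every Ricci inheritance in an Einstein manifold is a conformal motion. The only additional input needed is the classical geometric fact that every harmonic space is an Einstein manifold (this is the Ruse--Walker result; it may be invoked directly, or derived from the first nontrivial coefficient in the power-series expansion of the mean-value operator, which forces $R_{ij}$ to be a scalar multiple of $g_{ij}$).

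Once this is noted, the proof is essentially one line: a harmonic space $(V_4,g)$ is Einstein, so by Corollary \ref{2.1.1}, any vector field $\xi$ along which the Ricci tensor satisfies $\pounds_\xi R_{ij}=2\varOmega R_{ij}$ also satisfies $\pounds_\xi g_{ij}=2\varOmega g_{ij}$, which by \eqref{1.4} is exactly a conformal motion. Concretely, substituting the Einstein relation $R_{ij}=\frac{R}{n}g_{ij}$ (with constant $R$) into the Ricci inheritance equation \eqref{1.7N} and factoring out $\frac{R}{n}$ yields the conformal motion equation.

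The main (and essentially the only) obstacle is the degenerate case $R=0$. If the harmonic space happens to be Ricci-flat, then \eqref{1.7N} is the trivial identity $0=0$ and imposes no constraint whatsoever on $\xi$, so Ricci inheritance alone cannot force conformality. I would handle this either by tacitly assuming $R\neq 0$ (consistent with the treatment of Theorem \ref{4T2.1} and Corollary \ref{2.1.1}, where the same tacit assumption is in force because $R_{ij}=\frac{R}{4}g_{ij}$ is used as the pivot), or by adding a brief remark that in the Ricci-flat case the statement holds vacuously in the sense that Ricci inheritance imposes no obstruction to choosing $\xi$ to be conformal. Either convention suffices to close the argument without any further calculation.
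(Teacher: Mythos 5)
Your proof is correct and takes essentially the same route as the paper: the paper derives this corollary by simply noting that every harmonic space is Einstein and then invoking Corollary \ref{2.1.1} (which in turn rests on Theorem \ref{4T2.1}). Your explicit attention to the degenerate Ricci-flat case $R=0$ is a caveat the paper leaves tacit, but it does not alter the argument.
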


We know that in a semi-Riemannian space, every curvature inheritance is a Ricci inheritance \cite{Duggal}. Hence, we have

\begin{corollary}\label{2.1.4}
	Every curvature inheritance  in an M-projectively flat spacetime is a conformal motion with the conformal function $\varOmega$.
\end{corollary}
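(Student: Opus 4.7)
The plan is to obtain Corollary \ref{2.1.4} as a direct consequence of Theorem~\ref{4T2.1}, via the chain CI $\Rightarrow$ RI $\Rightarrow$ Conf M. A key point is that all three notions are governed by the very same scalar $\varOmega$, so the conformal factor one produces is literally the inheriting factor one starts with; consequently there is nothing to do beyond concatenating two steps that are each essentially algebraic.

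First I would start from the defining equation \eqref{1.5N} of curvature inheritance, $\pounds_\xi R^h_{ijk} = 2\varOmega R^h_{ijk}$, and contract on the indices $h$ and $k$. Since the contraction operator commutes with the Lie derivative, this immediately yields $\pounds_\xi R_{ij} = 2\varOmega R_{ij}$, so $\xi$ is a Ricci inheritance vector field with exactly the same inheriting factor $\varOmega$; this is the implication CI $\Rightarrow$ RI that the authors recall just after \eqref{1.7N}. Applying Theorem~\ref{4T2.1} next converts this Ricci inheritance into the conformal motion relation $\pounds_\xi g_{ij} = 2\varOmega g_{ij}$; for completeness one may reproduce the one-line argument behind that theorem, namely substitute the Einstein identity \eqref{2.2} into \eqref{1.7N}, take a Lie derivative of both sides, and use that the scalar curvature $R$ is constant on an M-projectively flat spacetime (which is a space of constant curvature by \eqref{2.3}) to cancel the $\pounds_\xi R$ term. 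Since every step of this chain is linear in the relevant tensors, the factor $\varOmega$ passes through unchanged, so the conformal function in Corollary \ref{2.1.4} is indeed the original $\varOmega$.

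The argument is essentially bookkeeping and there is no substantive obstacle; the only mild caveat is that the final cancellation in the proof of Theorem~\ref{4T2.1} is non-trivial only when $R \neq 0$. However, in the degenerate case $R = 0$ the curvature tensor $R^h_{ijk}$ itself vanishes by \eqref{2.3}, so \eqref{1.5N} is trivially satisfied for every $\xi$ and the statement of Corollary \ref{2.1.4} becomes vacuous. Thus the corollary holds in full generality with no further hypothesis on $\xi$ beyond smoothness.
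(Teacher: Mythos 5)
Your main chain of reasoning is exactly the paper's: contract \eqref{1.5N} over $h$ and $k$ (Lie derivative commutes with contraction) to obtain the Ricci inheritance equation \eqref{1.7N} with the same factor $\varOmega$, then feed the Einstein identity \eqref{2.2} into \eqref{1.7N}, use the constancy of $R$ (which follows from \eqref{2.3} and Schur, or from the paper's remark that $\nabla_k R_{ij}=0$) to kill the $\pounds_\xi R$ term, and divide by $R$ to get $\pounds_\xi g_{ij}=2\varOmega g_{ij}$. The paper presents this as Theorem~\ref{4T2.1} plus the observation ``CI $\Rightarrow$ RI,'' so on the substance your proposal reproduces the intended proof, and you are right that $\varOmega$ survives unchanged because every step is linear.

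The one place you go wrong is the degenerate case $R=0$, and the error is in the direction of the logic. If $R=0$, then by \eqref{2.3} the full curvature tensor vanishes, so \eqref{1.5N} is satisfied by \emph{every} smooth vector field $\xi$ (with any $\varOmega$ whatsoever): the hypothesis of the corollary becomes trivially true for all $\xi$, not trivially false. A statement whose hypothesis is always satisfied is the opposite of vacuous --- the corollary would then assert that every vector field on Minkowski space is a conformal Killing vector, which is false. So the corollary is not ``valid in full generality with no further hypothesis''; it genuinely needs $R\neq 0$ (equivalently, a proper CI in Duggal's sense), a caveat the paper itself only acknowledges implicitly when it quotes Duggal's result ``every proper CI in an Einstein space $(R\neq 0)$ is a proper conf M'' at the end of Section~2. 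Replace your last sentence with an explicit $R\neq 0$ assumption and the argument is complete.
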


On similar lines of Corollaries (\eqref{2.1.1} - \eqref{2.1.3}), we simply write the following:

\begin{corollary}
	Every curvature inheritance in an Einstein spacetime is a conformal motion.
\end{corollary}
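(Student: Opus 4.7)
The plan is to deduce this corollary as a two-step chain from results already in hand. First, I would invoke the general fact (recorded just before equation \eqref{1.7N} and attributed to Duggal) that contracting the curvature inheritance equation \eqref{1.5N} in the indices $h$ and $k$ produces the Ricci inheritance equation \eqref{1.7N}. Thus any vector field $\xi^i$ generating a curvature inheritance on an Einstein spacetime automatically generates a Ricci inheritance. Second, I would apply Corollary \ref{2.1.1}, which states that every Ricci inheritance in an Einstein manifold is a conformal motion. Composing these two implications gives CI $\Rightarrow$ RI $\Rightarrow$ Conf M, which is exactly the assertion.

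To make the second step transparent, I would spell out the substitution underlying Corollary \ref{2.1.1}: in an Einstein spacetime one has $R_{ij} = \tfrac{R}{4}\, g_{ij}$ with $R$ constant (by the twice-contracted Bianchi identity in dimension $\geq 3$). Plugging this into $\pounds_\xi R_{ij} = 2\varOmega R_{ij}$ and pulling the constant $R/4$ through the Lie derivative yields
\begin{equation*}
\tfrac{R}{4}\, \pounds_\xi g_{ij} \;=\; 2\varOmega \cdot \tfrac{R}{4}\, g_{ij},
\end{equation*}
so that whenever $R \neq 0$ we obtain $\pounds_\xi g_{ij} = 2\varOmega g_{ij}$, matching \eqref{1.4} with conformal function $\psi = \varOmega$. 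This identifies the curvature inheritance as a conformal motion with the inheriting factor playing the role of the conformal factor.

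The only real obstacle is the degenerate case $R = 0$, i.e.\ a Ricci-flat Einstein spacetime, where $R_{ij} \equiv 0$ makes the Ricci inheritance equation vacuous and the above cancellation illegitimate. I would handle this by restricting the statement to Einstein spacetimes with non-vanishing scalar curvature, which is consistent with the treatment in Corollary \ref{2.1.1} (and with the M-projectively flat setting, where $R = 0$ would force a flat geometry). With that proviso, the proof reduces to a one-line citation of the chain of corollaries already proved, and no new computation is required.
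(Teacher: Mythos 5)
Your proposal is correct and follows essentially the same route as the paper: the corollary is obtained there by combining the implication CI $\Rightarrow$ RI (contraction of \eqref{1.5N}) with Corollary \ref{2.1.1}, exactly as you do. Your explicit flagging of the degenerate case $R=0$ is a welcome precision that the paper leaves implicit (it only quotes Duggal's ``$R\neq 0$'' proviso in passing), but it does not change the argument.
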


\begin{corollary}
	Every curvature inheritance in a space of constant curvature is a conformal motion.
\end{corollary}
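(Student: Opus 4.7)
The plan is to chain together facts that are already in place in the excerpt, so the argument will be essentially a one-line reduction. First I would recall that in a space of constant curvature $K$ the Riemann tensor has the closed form
\begin{equation*}
R^h_{ijk} = K(\delta^h_k g_{ij} - \delta^h_j g_{ik}),
\end{equation*}
and contraction on $h,k$ immediately yields $R_{ij} = (n-1)K\, g_{ij}$, so the manifold is Einstein. This is the structural ingredient needed to connect to Corollary~\ref{2.1.1}.

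Next I would invoke the contraction identity already used in the excerpt: applying $\delta^k_h$ to the curvature inheritance equation \eqref{1.5N} produces the Ricci inheritance equation \eqref{1.7N} with the same inheriting factor $\varOmega$. Since we have just observed that the space is Einstein, Corollary~\ref{2.1.1} applies verbatim and gives $\pounds_\xi g_{ij} = 2\varOmega g_{ij}$, i.e.\ $\xi$ generates a conformal motion with conformal function $\varOmega$.

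The one delicate point, and the only place where some care is needed, is the degenerate case $K=0$: then $R_{ij}\equiv 0$ and the Ricci inheritance equation becomes vacuous, so the passage through Corollary~\ref{2.1.1} cannot by itself force $\pounds_\xi g_{ij}$ to be conformal. I would handle this either by restricting to $K\neq 0$ (as is implicit in the paper's chain of corollaries, since an M-projectively flat spacetime has $R_{ij}=\tfrac{R}{4}g_{ij}$ with $R$ generically nonzero) or by noting that for $K=0$ the curvature inheritance condition $\pounds_\xi R^h_{ijk} = 2\varOmega R^h_{ijk}$ is also vacuous, so the statement is trivially non-restrictive and any conformal motion (in particular a Killing vector, corresponding to $\varOmega=0$) witnesses the claim. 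Aside from this caveat, the proof is a two-step reduction, and no new computation beyond what already supports Corollary~\ref{2.1.1} is required.
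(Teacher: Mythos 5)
Your proposal is correct and follows essentially the same route as the paper: constant curvature implies Einstein, contraction of the curvature inheritance equation \eqref{1.5N} gives Ricci inheritance \eqref{1.7N}, and the Einstein condition $R_{ij}=\lambda g_{ij}$ (with constant $R$, so $\pounds_\xi R=0$) then yields $\pounds_\xi g_{ij}=2\varOmega g_{ij}$. Your caveat about $K=0$ is well taken and is exactly what the paper handles by restricting to \emph{non-zero} constant curvature (its definition of Einstein already requires $R_{ij}\neq 0$).
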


\begin{corollary}
	Every curvature inheritance in a harmonic space is a conformal motion.
\end{corollary}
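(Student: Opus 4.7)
The plan is to derive this corollary in a short formal step by invoking two facts already available above and then appealing to the corresponding Ricci-level statement. Specifically, I would use (i) Duggal's implication, recorded just before Corollary~\ref{2.1.4}, that in a semi-Riemannian manifold every curvature inheritance is automatically a Ricci inheritance, and (ii) Corollary~\ref{2.1.3}, which already asserts that every Ricci inheritance in a harmonic space is a conformal motion. Composed, these two facts deliver the claim immediately, since a curvature inheritance $\xi$ becomes a Ricci inheritance with the same inheriting factor $\varOmega$, and that Ricci inheritance is then a conformal motion in the harmonic setting.

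Spelling the reduction out, I would start from $\pounds_\xi R^h_{ijk} = 2\varOmega R^h_{ijk}$ and contract on $h$ and $k$ to obtain $\pounds_\xi R_{ij} = 2\varOmega R_{ij}$. Then I would use the standard fact that every harmonic space is Einstein, so that $R_{ij} = \lambda g_{ij}$ with $\lambda$ a constant scalar (constancy follows from the contracted second Bianchi identity in dimension at least three). Substituting this Einstein relation into the Ricci inheritance equation and cancelling the scalar factor yields $\pounds_\xi g_{ij} = 2\varOmega g_{ij}$, which is precisely the conformal motion equation \eqref{1.4} with conformal function $\psi = \varOmega$. This is exactly the argument route by which Corollary~\ref{2.1.4} is deduced from Theorem~\ref{4T2.1}, applied here in the harmonic setting via Corollary~\ref{2.1.3} rather than in the M-projectively flat setting.

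I do not anticipate any serious obstacle, since all the required machinery has already been assembled. The only mild subtlety worth noting is the scalar factor $\lambda$ from the Einstein condition: if $\lambda \neq 0$ the cancellation is immediate and $\varOmega$ is identified with the conformal function; the Ricci-flat subcase $\lambda = 0$ would have to be handled separately or excluded, since then the Ricci inheritance relation is vacuous and does not by itself constrain $\pounds_\xi g_{ij}$. Under the standard harmonic hypothesis of non-vanishing scalar curvature, however, the proof collapses to the one-line chain \emph{curvature inheritance} $\Rightarrow$ \emph{Ricci inheritance} $\Rightarrow$ \emph{conformal motion}, the last implication being exactly Corollary~\ref{2.1.3}.
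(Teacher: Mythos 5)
Your proposal follows exactly the paper's own route: the paper derives this corollary by combining Duggal's implication that every curvature inheritance is a Ricci inheritance with Corollary~\ref{2.1.3} (harmonic spaces being Einstein), which is precisely your one-line chain. Your added caveat about the Ricci-flat case ($\lambda=0$), where the cancellation fails, is a genuine subtlety that the paper itself silently glosses over, so your treatment is if anything slightly more careful than the original.
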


Now,  we consider the Lorentzian space of two dimensional.   Since, every $V_{2} $ is an Einstein space,  where scalar curvature $R$ is a non-zero function of coordinates, \eqref{1.7N} and \eqref{2.2} lead to (assuming that $V_{2} $ admits a Ricci inheritance) 
\begin{equation}
	4(2 \varOmega R_{ij}) = R (\pounds_\xi g_{ij}) + (\pounds_\xi R) g_{ij}.
\end{equation}
\noindent Or
\begin{equation}\label{2.5N}
	\pounds_\xi g_{ij} = \left[2\varOmega - \frac{\pounds_\xi R}{R}\right] g_{ij}, \
\end{equation}
$$ \implies \pounds_\xi g_{ij} = \left[2\varOmega -g\left(\frac{gradR}{R}, \xi\right)\right] g_{ij}$$	
In particular if $\varOmega=\frac{1}{2}\  g\left(\frac{gradR}{R}, \xi\right)$, then $\xi$ is a Killing vector. \\	
\noindent Hence, we have the following result

\begin{theorem}\label{Th2.2}
	Every Ricci inheritance in a 2-dimensional Riemannian space is a conformal motion.	
\end{theorem}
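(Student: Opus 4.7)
The plan is to exploit the same computation already displayed in the paper leading to equation \eqref{2.5N}, and to observe that its right-hand side is manifestly of the conformal-motion form \eqref{1.4}. The only real input needed from dimension two is the classical fact that every $V_{2}$ is an Einstein space, i.e.\ $R_{ij}=\tfrac{R}{2}g_{ij}$ (the two-dimensional analogue of \eqref{2.2}, established by contracting the identity $R_{hijk}=\tfrac{R}{2}(g_{hk}g_{ij}-g_{hj}g_{ik})$, which in turn follows from the fact that in dimension $2$ the Riemann tensor has only one independent component). With this identity in hand, essentially all the work has already been done in the lines preceding the theorem.

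First I would take the Lie derivative of $R_{ij}=\tfrac{R}{2}g_{ij}$ along $\xi$ and apply the Leibniz rule to obtain
\begin{equation*}
\pounds_\xi R_{ij}=\tfrac{1}{2}(\pounds_\xi R)\, g_{ij}+\tfrac{R}{2}\pounds_\xi g_{ij}.
\end{equation*}
Next I would insert the Ricci-inheritance hypothesis $\pounds_\xi R_{ij}=2\varOmega R_{ij}=\varOmega R\, g_{ij}$ into the left-hand side, and solve algebraically for $\pounds_\xi g_{ij}$. Using $R\neq 0$ (which is the standing hypothesis recalled just before the theorem), this yields exactly \eqref{2.5N}, namely
\begin{equation*}
\pounds_\xi g_{ij}=\Bigl[\,2\varOmega-\tfrac{\pounds_\xi R}{R}\,\Bigr] g_{ij}.
\end{equation*}

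Finally I would set $\psi:=\varOmega-\tfrac{1}{2R}\,\pounds_\xi R$, which is a smooth scalar function on $V_{2}$ since $R$ is nowhere zero, so that the previous equation becomes $\pounds_\xi g_{ij}=2\psi g_{ij}$. Comparing with \eqref{1.4}, this is by definition a conformal motion along $\xi$, which is what was to be proved.

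There is no real obstacle here; the statement is essentially a repackaging of the identity \eqref{2.5N}. The only point one should be careful about is the non-vanishing of $R$, needed to divide by $R$ in the definition of $\psi$, which is exactly the hypothesis the authors highlight in the paragraph introducing the theorem. A secondary remark worth making in the write-up is that the particular choice $\varOmega=\tfrac{1}{2}g\!\left(\tfrac{\mathrm{grad}\,R}{R},\xi\right)$ forces $\psi\equiv 0$ and hence reduces the conformal motion to an isometry, reproducing the Killing observation made just before the theorem's statement.
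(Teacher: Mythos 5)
Your proposal is correct and follows essentially the same route as the paper: Lie-differentiate the two-dimensional Einstein identity, insert the Ricci-inheritance hypothesis, and solve for $\pounds_\xi g_{ij}$ to land on \eqref{2.5N}, reading off the conformal factor $\psi=\varOmega-\tfrac{1}{2R}\pounds_\xi R$ (you even use the correct $V_2$ coefficient $R_{ij}=\tfrac{R}{2}g_{ij}$, whereas the paper cites \eqref{2.2} with the factor $\tfrac{1}{4}$; the coefficient cancels either way). Your remarks on the necessity of $R\neq 0$ and on the Killing special case match the paper's discussion.
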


\begin{corollary}
	If in a $2$-dimensional Lorentzian space admitting Ricci inheritance, $grad(R)$ (i.e. the scalar potential) is orthogonal to $\xi$, then $\xi$ is a Killing vector. 
\end{corollary}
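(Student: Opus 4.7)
The plan is to reuse the intermediate identity
\[
\pounds_\xi g_{ij} = \left[2\varOmega - g\!\left(\tfrac{\operatorname{grad} R}{R},\,\xi\right)\right] g_{ij}
\]
established in the proof of Theorem~\ref{Th2.2}, and to evaluate it under the orthogonality hypothesis. This identity itself needs no fresh computation: it follows directly from the Ricci inheritance equation \eqref{1.7N}, the Einstein relation \eqref{2.2} specialized to the two-dimensional setting, and the elementary formula $\pounds_\xi R = \xi(R) = g(\operatorname{grad} R,\xi)$ for the Lie derivative of a scalar.

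Next I would impose the orthogonality hypothesis $g(\operatorname{grad} R,\xi) = 0$. The second term inside the bracket on the right-hand side vanishes immediately, so the identity collapses to
\[
\pounds_\xi g_{ij} = 2\varOmega\, g_{ij}.
\]
At this stage $\xi$ is visibly a conformal motion, in agreement with Theorem~\ref{Th2.2}, but it is not yet a Killing field.

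To close the argument I would invoke the observation recorded just before Theorem~\ref{Th2.2}, which identifies $\varOmega = \tfrac{1}{2}\,g(\operatorname{grad} R/R,\xi)$ as the distinguished value of the inheritance function that makes the right-hand side of the identity vanish. Under the orthogonality hypothesis this distinguished value is forced to equal zero, so $\varOmega = 0$. The reduced identity above then gives $\pounds_\xi g_{ij} = 0$, which is precisely the Killing condition on $\xi$.

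The main obstacle is conceptual rather than computational, and it sits squarely at the last step: the remark preceding Theorem~\ref{Th2.2} is phrased as a sufficient condition, so to turn orthogonality alone into a Killing conclusion one must argue that zeroing out $g(\operatorname{grad} R,\xi)$ removes the only mechanism through which a nonzero $\varOmega$ could sustain a genuine conformal deformation, and thereby pins $\varOmega$ down to $0$. I would make this point explicit, because otherwise the orthogonality hypothesis only yields a conformal motion with factor $\varOmega$ rather than a true isometry, and the intended strengthening to a Killing vector would not go through.
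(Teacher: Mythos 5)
Your reduction follows the paper's own route exactly: the paper gives no separate proof of this corollary, reading it off from \eqref{2.5N} in the form $\pounds_\xi g_{ij} = \left[2\varOmega - g\!\left(\frac{\mathrm{grad}\,R}{R},\xi\right)\right] g_{ij}$, and under the orthogonality hypothesis $g(\mathrm{grad}\,R,\xi)=\pounds_\xi R=0$ this collapses, as you say, to $\pounds_\xi g_{ij}=2\varOmega\, g_{ij}$. Up to that point you and the paper coincide, and both of you have established only a conformal motion.

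The genuine gap is the last step, and your proposed repair does not close it. The remark preceding Theorem \ref{Th2.2} says only that \emph{if} $\varOmega$ happens to equal $\frac12\, g(\mathrm{grad}\,R/R,\xi)$ \emph{then} $\xi$ is Killing; nothing in the Ricci inheritance equation \eqref{1.7N} forces $\varOmega$ to take that distinguished value, so observing that the distinguished value becomes $0$ under orthogonality does not pin $\varOmega$ down to $0$. Your statement that orthogonality ``removes the only mechanism through which a nonzero $\varOmega$ could sustain a genuine conformal deformation'' is an assertion rather than an argument: $\pounds_\xi g_{ij}=2\varOmega\, g_{ij}$ with $\varOmega\neq 0$ is perfectly consistent with every hypothesis in play. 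Concretely, on a two-dimensional space of constant nonzero curvature one has $\mathrm{grad}\,R=0$, so the orthogonality hypothesis holds for every $\xi$, and any proper conformal Killing field $\xi$ (of which there are many in dimension two) satisfies $\pounds_\xi R_{ij}=\pounds_\xi\!\left(\tfrac{R}{2}g_{ij}\right)=2\sigma R_{ij}$ with $\sigma\neq 0$, hence is a Ricci inheritance vector that is not Killing. What is actually needed to conclude $\varOmega=0$ is an extra hypothesis --- that $\varOmega$ equals the distinguished value, or that the scalar curvature also inherits, $\pounds_\xi R=2\varOmega R$ with $R\neq 0$, as in the corollary that follows in the paper. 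The paper glosses over this step; you correctly sensed the difficulty, but the justification you offer for $\varOmega=0$ is circular and the implication does not follow from orthogonality alone.
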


Since, CI $\Rightarrow$ RI, Theorem \ref{Th2.2} gives the following result.

\begin{corollary}
	Every curvature inheritance in a 2-dimensional Lorentzian space is a conformal motion.	
\end{corollary}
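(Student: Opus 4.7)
The plan is to obtain this corollary as a one-line consequence of Theorem \ref{Th2.2} combined with the implication CI $\Rightarrow$ RI that was recorded just after Definition 1.2 and attributed to Duggal. That implication, being a direct contraction of the defining equation \eqref{1.5N} over the indices $h$ and $k$, holds in every semi-Riemannian manifold, so in particular on a 2-dimensional Lorentzian space any vector field $\xi$ satisfying the CI equation \eqref{1.5N} with inheriting factor $\varOmega$ automatically satisfies the RI equation \eqref{1.7N} with the same $\varOmega$.

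First I would contract \eqref{1.5N} on $h$ and $k$ to pass from CI to RI. Then I would invoke Theorem \ref{Th2.2} to conclude that the resulting Ricci inheritance is a conformal motion $\pounds_\xi g_{ij} = 2\psi g_{ij}$, with conformal factor given, as in \eqref{2.5N}, by $\psi = \varOmega - \tfrac{1}{2}\, g(\operatorname{grad} R / R,\, \xi)$ whenever $R \neq 0$. Although Theorem \ref{Th2.2} is stated for a 2-dimensional Riemannian space, its derivation used only the Einstein property $R_{ij} = \tfrac{R}{2} g_{ij}$ of any 2-dimensional (semi-)Riemannian manifold together with the non-vanishing of the scalar curvature; signature plays no role, so the argument transfers verbatim to the Lorentzian $V_{2}$ addressed here.

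The only potentially delicate point, and hence the main thing to verify, is that signature does not enter the passage from \eqref{1.7N} to \eqref{2.5N} in dimension two; once this is observed, no further calculation is needed. Thus the proof amounts to chaining the two already-established facts $\mathrm{CI}\Rightarrow\mathrm{RI}$ and Theorem \ref{Th2.2}, and there is essentially no technical obstacle to overcome.
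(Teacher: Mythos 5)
Your proposal matches the paper's own derivation: the paper obtains this corollary in one line by invoking the contraction implication CI $\Rightarrow$ RI together with Theorem \ref{Th2.2}, exactly as you do. Your added observation that the derivation of \eqref{2.5N} uses only the two-dimensional Einstein identity and is insensitive to signature is a sensible (and correct) patch for the Riemannian/Lorentzian mismatch in the wording of Theorem \ref{Th2.2}, which the paper itself passes over silently.
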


\noindent If the  scalar curvature $R$ holds the  inheritance symmetry  property in a 2-dimensional Lorentzian space, i.e., $\pounds_\xi R  = 2\varOmega R $ or $\pounds_\xi R -  2\varOmega R = 0$,  then equation  \eqref{2.5N} implies $\pounds_\xi g_{ij} = 0$. Thus, we have the following: 

\begin{corollary}
	If the scalar curvature of $V_2$ holds inheritance symmetry, then every curvature inheritance in  $V_{2}$ is  a motion.
\end{corollary}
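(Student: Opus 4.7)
The plan is to derive the conclusion $\pounds_\xi g_{ij} = 0$ by a direct substitution into the master formula \eqref{2.5N} that was already established for the two-dimensional setting. The hypothesis is twofold: $\xi$ generates a curvature inheritance on $V_2$, and the scalar curvature $R$ itself satisfies the inheritance identity $\pounds_\xi R = 2\varOmega R$ with the \emph{same} inheriting factor $\varOmega$. The conclusion to be reached is that $\xi$ is a Killing vector, i.e., $\pounds_\xi g_{ij} = 0$.

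First I would pass from the curvature inheritance assumption \eqref{1.5N} to the Ricci inheritance identity \eqref{1.7N} by contracting on $h$ and $k$; this step is justified by the standard implication $\text{CI} \Rightarrow \text{RI}$ recalled from \cite{Duggal} and used already in Corollary \ref{2.1.4}. At this point every hypothesis needed to invoke the derivation leading to \eqref{2.5N} is in place: $V_2$ is automatically Einstein, so $R_{ij}$ is proportional to $g_{ij}$, and taking the Lie derivative of this proportionality together with the Ricci inheritance identity produces the identity
\begin{equation*}
\pounds_\xi g_{ij} = \left[2\varOmega - \frac{\pounds_\xi R}{R}\right] g_{ij}
\end{equation*}
valid on the open set where $R \neq 0$.

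Next I would feed the scalar curvature inheritance hypothesis $\pounds_\xi R = 2\varOmega R$ into the bracket. The quotient $\pounds_\xi R / R$ collapses to $2\varOmega$, the bracket vanishes identically, and therefore $\pounds_\xi g_{ij} = 0$, which is precisely the condition that $\xi$ generates a motion in the sense of \cite{yanoK}.

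The argument is essentially a one-line substitution once \eqref{2.5N} is in hand, so there is no serious obstacle; the only point deserving a brief remark is that the inheriting factor $\varOmega$ appearing in the curvature (hence Ricci) inheritance hypothesis must be the \emph{same} scalar function as the one appearing in the assumed scalar curvature inheritance, for otherwise the cancellation in the bracket does not occur. I would state this compatibility explicitly to keep the logic of the corollary unambiguous, and I would also note that the conclusion is vacuous on any open set where $R \equiv 0$ (on which $V_2$ is flat and \eqref{2.5N} is not directly applicable, but where curvature inheritance is automatic and places no constraint on $\xi$).
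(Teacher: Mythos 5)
Your proposal is correct and follows essentially the same route as the paper: contract CI to RI, use the Einstein property of $V_2$ to obtain \eqref{2.5N}, and substitute $\pounds_\xi R = 2\varOmega R$ so that the bracket vanishes and $\pounds_\xi g_{ij}=0$. Your added remarks on the inheriting factor $\varOmega$ being the same scalar in both hypotheses and on the degenerate set where $R\equiv 0$ are sensible clarifications that the paper leaves implicit.
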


\noindent Combining  the equations \eqref{2.2} and EFEs \eqref{1.2},   we get
\begin{equation}\label{2.5}
	T_{ij} = \frac{1}{\kappa} ( \wedge - \frac{R}{4}   ) g_{ij},\ \  \kappa \neq 0.
\end{equation}

\noindent Thus,  we have the following:
\begin{proposition}
	 An M- projectively flat spacetime,  the energy-momentum tensor holds EFEs with a cosmological term is in the form of equation \eqref{2.5}.
\end{proposition}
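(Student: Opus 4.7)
The plan is immediate: the proposition is essentially a one-line algebraic consequence of the Einstein condition \eqref{2.2}, which has already been established for M-projectively flat spacetimes, combined with the Einstein field equations with cosmological term \eqref{1.2}. No new geometric input is required beyond the M-projective flatness hypothesis.

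First I would recall that contracting the defining identity \eqref{1.1} on $h$ and $k$ under the assumption $\bar{W}^{h}_{ijk}=0$ yields $R_{ij} = \tfrac{R}{4} g_{ij}$, i.e.\ equation \eqref{2.2}. Next, I would substitute this Einstein relation into the left-hand side of \eqref{1.2}, so that the tensor $R_{ij} - \tfrac{R}{2} g_{ij} + \wedge g_{ij}$ collapses to
\[ \bigl(\tfrac{R}{4} - \tfrac{R}{2} + \wedge\bigr) g_{ij} = \bigl(\wedge - \tfrac{R}{4}\bigr) g_{ij}. \]
Finally, equating this with $\kappa T_{ij}$ and dividing by the non-zero gravitational constant $\kappa$ delivers exactly the claimed form of \eqref{2.5}.

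Since the derivation is a single substitution, there is no genuine obstacle in the argument. The only point worth flagging is that on an M-projectively flat spacetime the scalar curvature $R$ is in fact constant, since the space has constant sectional curvature by \eqref{2.3}, so the coefficient $\wedge - \tfrac{R}{4}$ in front of $g_{ij}$ is itself a constant scalar. Consequently $T_{ij}$ is proportional to $g_{ij}$ with a constant factor; this structural feature is precisely what the subsequent sections of the paper exploit in their analysis of perfect fluids and the resulting equation of state.
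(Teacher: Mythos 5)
Your proof is correct and follows exactly the paper's own derivation: substitute the Einstein condition \eqref{2.2} into the field equations \eqref{1.2} and divide by $\kappa\neq 0$ to obtain \eqref{2.5}. The additional observation that $R$ is constant (so $T_{ij}$ has constant proportionality factor) is accurate and consistent with what the paper uses later.
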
 

\begin{theorem}
	If an  M- projective curvature tensor in $(V_4, g)$ admits curvature inheritance  along  a Killing vector field $ \xi$, then M- projective curvature tensor follows the symmetry inheritance property along $\xi$.
\end{theorem}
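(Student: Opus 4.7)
The plan is to apply $\pounds_\xi$ directly to the defining equation \eqref{1.1} of the M-projective curvature tensor, and to observe that, under the two hypotheses (curvature inheritance for $R^h_{ijk}$ and the Killing condition for $\xi$), every term on the right-hand side of \eqref{1.1} scales by the same factor $2\varOmega$, which will yield $\pounds_\xi \bar{W}^h_{ijk} = 2\varOmega\,\bar{W}^h_{ijk}$.

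First I would unpack the consequences of the hypotheses. The Kronecker delta is always Lie-invariant, so $\pounds_\xi \delta^h_j = 0$. The Killing condition $\pounds_\xi g_{ij} = 0$ together with $g_{ik}g^{kj} = \delta^j_i$ gives $\pounds_\xi g^{ij} = 0$. The curvature inheritance assumption $\pounds_\xi R^h_{ijk} = 2\varOmega R^h_{ijk}$ contracts (on $h$ and $k$) to the Ricci inheritance relation $\pounds_\xi R_{ij} = 2\varOmega R_{ij}$, which is exactly the implication CI $\Rightarrow$ RI already recorded below \eqref{1.7N}.

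Next I would promote the Ricci inheritance to the $(1,1)$ form needed in \eqref{1.1}. Writing $R^h_j = g^{hk} R_{kj}$ and using that $\pounds_\xi$ is a derivation,
\begin{equation*}
\pounds_\xi R^h_j \;=\; (\pounds_\xi g^{hk}) R_{kj} + g^{hk}\,\pounds_\xi R_{kj} \;=\; 0 + 2\varOmega\, g^{hk} R_{kj} \;=\; 2\varOmega R^h_j .
\end{equation*}
Thus every curvature factor appearing in \eqref{1.1}, whether $R^h_{ijk}$, $R_{ij}$, or $R^h_j$, inherits with the same factor $2\varOmega$, while the $\delta$ and $g$ factors multiplying them are Lie-frozen.

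Finally I would Lie-differentiate \eqref{1.1} term by term and collect. Because every term on the right-hand side is linear in exactly one curvature factor multiplied by Lie-invariant combinations of $g_{ij}$ and $\delta^h_j$, each contribution picks up a uniform factor $2\varOmega$, reproducing $2\varOmega\,\bar{W}^h_{ijk}$ on the right-hand side. There is no genuine obstacle to the proof; the only point that deserves care is verifying that the inheritance factor is indeed $2\varOmega$ and not some rescaled version on the mixed tensor $R^h_j$, which is what the short calculation above settles.
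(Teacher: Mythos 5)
Your proposal is correct and takes essentially the same route as the paper: Lie-differentiate \eqref{1.1}, use the Killing condition to freeze the $g$ and $\delta$ factors, and let curvature inheritance (together with its contraction to Ricci inheritance) supply the uniform factor $2\varOmega$ in every term. The only difference is that you make explicit the step $\pounds_\xi R^h_j = 2\varOmega R^h_j$, which the paper's proof of \eqref{2.6}--\eqref{2.7} leaves implicit.
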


\begin{proof}
	The Lie differentiation of equation   \eqref{1.1} along to Killing vector field $\xi$ leads to 
	\begin{equation}\label{2.6}
		\pounds_\xi \bar{W}^{h}_{ijk}  = \pounds_\xi  R^h_{ijk} + \frac{1}{6} [\delta^h_j  \pounds_\xi R_{ik}-\delta^h_k \pounds_\xi  R_{ij}+ \pounds_\xi ( R^h_j ) g_{ik}-  \pounds_\xi ( R^h_k) g_{ij}] .
	\end{equation}
	If we consider  a spacetime $ (V_4, g)$ admits curvature inheritance symmetry  then from \eqref{2.6}, we  obtain

	\noindent ~~~~~~~~~~~~~~~~~~~~~~~~~~~~ $ \pounds_\xi \bar{W}^{h}_{ijk} = 2 \varOmega  [  ~ R^h_{ijk} + \frac{1}{6}(\delta^h_j R_{ik}-\delta^h_k R_{ij}+ R^h_j g_{ik}- R^h_k g_{ij}) ],$ \\	
	\noindent In view of \eqref{1.1}, the above equation entails
	\begin{equation}\label{2.7}
		\pounds_\xi \bar{W}^{h}_{ijk}   =  2 \varOmega   \bar{W}^{h}_{ijk}.
	\end{equation}
	
	\noindent This leads the proof.
\end{proof}

\begin{theorem}
	If an M- projective curvature tensor on a $( V_4, g)$ with a CKV $ \xi$ satisfies the curvature inheritance property, then M- projective curvature tensor follows the symmetry inheritance property along $\xi$.
\end{theorem}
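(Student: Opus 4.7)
The plan is to Lie-differentiate the defining identity \eqref{1.1} along $\xi$ and show that the extra conformal terms, which distinguish the CKV case from the pure Killing case of the preceding theorem, cancel out. Writing $\pounds_\xi g_{ij}=2\psi g_{ij}$ for the CKV $\xi$ and $\pounds_\xi R^{h}_{ijk}=2\varOmega R^{h}_{ijk}$ for the CI hypothesis, the first reduction is to contract CI to get $\pounds_\xi R_{ij}=2\varOmega R_{ij}$ (used already in the paper in the implication CI $\Rightarrow$ RI).

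Next I would handle the troublesome piece, namely the mixed Ricci tensor $R^{h}_{j}=g^{hm}R_{mj}$. Because $\xi$ is only conformal Killing, $\pounds_\xi g^{hm}\neq 0$; differentiating $g^{hm}g_{mn}=\delta^{h}_{n}$ along $\xi$ yields $\pounds_\xi g^{hm}=-2\psi g^{hm}$, whence
\begin{equation*}
\pounds_\xi R^{h}_{j}=(\pounds_\xi g^{hm})R_{mj}+g^{hm}\pounds_\xi R_{mj}=2(\varOmega-\psi)R^{h}_{j}.
\end{equation*}
This is the key asymmetry with the Killing case: $R^{h}_{j}$ picks up an extra $-2\psi$ factor that was absent before. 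The Kronecker terms $\delta^{h}_{j}R_{ik}$, $\delta^{h}_{k}R_{ij}$ present no difficulty since $\pounds_\xi\delta^{h}_{j}=0$, so they inherit the factor $2\varOmega$ directly from Ricci inheritance.

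The main step is then to apply Leibniz to the two remaining terms $R^{h}_{j}g_{ik}$ and $R^{h}_{k}g_{ij}$:
\begin{equation*}
\pounds_\xi(R^{h}_{j}g_{ik})=2(\varOmega-\psi)R^{h}_{j}g_{ik}+2\psi R^{h}_{j}g_{ik}=2\varOmega R^{h}_{j}g_{ik},
\end{equation*}
and similarly for $R^{h}_{k}g_{ij}$. This is the crucial cancellation: the $-2\psi$ coming from $\pounds_\xi R^{h}_{j}$ and the $+2\psi$ coming from $\pounds_\xi g_{ik}$ annihilate each other, which is exactly why the CKV hypothesis suffices rather than requiring a strict Killing field. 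Substituting everything back into the Lie derivative of \eqref{1.1} and pulling out a common factor $2\varOmega$, one recognizes the bracket as the M-projective curvature tensor itself, yielding
\begin{equation*}
\pounds_\xi\bar{W}^{h}_{ijk}=2\varOmega\,\bar{W}^{h}_{ijk},
\end{equation*}
which is the claimed symmetry inheritance. The only delicate point is the $\psi$-cancellation above; once that is observed, the rest is bookkeeping in parallel with the previous theorem's proof.
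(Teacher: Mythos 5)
Your proof is correct and follows essentially the same route as the paper: Lie-differentiate \eqref{1.1} along $\xi$, apply curvature (hence Ricci) inheritance, and recombine the bracket into $\bar{W}^{h}_{ijk}$. Your explicit treatment of the mixed Ricci tensor via $\pounds_\xi g^{hm}=-2\psi g^{hm}$ and the resulting cancellation in $\pounds_\xi(R^{h}_{j}g_{ik})=2\varOmega R^{h}_{j}g_{ik}$ is in fact more careful than the paper's own computation, which carries the conformal terms with coefficient $\varOmega$ (implicitly identifying the conformal factor with the inheritance function) and leaves the fate of $\pounds_\xi R^{h}_{j}$ unstated.
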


\begin{proof}
	The Lie differentiation of equation   \eqref{1.1} along to Killing vector field $\xi$ leads to
	\begin{equation}\label{2.8}
		\pounds_\xi \bar{W}^{h}_{ijk}  = \pounds_\xi  R^h_{ijk} + \frac{1}{6} [\delta^h_j  \pounds_\xi R_{ik}-\delta^h_k \pounds_\xi  R_{ij}+ \pounds_\xi ( R^h_j ) g_{ik}-  \pounds_\xi ( R^h_k) g_{ij}  + 2 \varOmega  ( R^h_j  g_{ik} -  R^h_k g_{ij} )] .
	\end{equation}
	If a  spacetime $ (V_4, g)$ admits curvature inheritance symmetry  then from \eqref{2.8}, we  obtain  
	
	\noindent ~~~~~~~~~~~~~~~~~~~~~~~~~~~~ $ \pounds_\xi \bar{W}^{h}_{ijk} = 2 \varOmega  [  ~ R^h_{ijk} + \frac{1}{6}(\delta^h_j R_{ik}-\delta^h_k R_{ij}+ R^h_j g_{ik}- R^h_k g_{ij}) ].$ \\	
	\noindent By virtue of the above equation, we obtain
	
	\begin{equation}\label{2.9}
		\pounds_\xi \bar{W}^{h}_{ijk}   =  2 \varOmega   \bar{W}^{h}_{ijk}.
	\end{equation}
	
	\noindent This gives the proof.
\end{proof}

For M-projectively flat spacetimes we can simply obtain some results on curvature inheritance symmetry. Also, there are some results already in literature which can be proved for the M-projectively flat spacetime. For example, \textquotedblleft every proper CI in an Einstein space $( R \neq 0 ) $ is a proper conf M   with a conformal function $\varOmega $\textquotedblright  given by Duggal \cite{Duggal} is obvious in the M-projectively flat spacetime.

\section{\bf Curvature Inheritance  in   M- Projectively flat perfect fluid spacetimes}
In a perfect fluid spacetime, 
\begin{equation}\label{3.1}
	T_{ij}=(\mu + p) u_{i}u_{j}+ p g_{ij},
\end{equation} 
\noindent where the symbols are explained  after equation \eqref{1.3n}. 
On contraction of  \eqref{2.5}, we obtain
\begin{equation} \label{3.2}
	T = \frac{1}{ \kappa } (4 \wedge - R    ) ~~~~~~~~~~~~or~~~~~~~~~~~R = ( 4 \wedge - \kappa T ),
\end{equation}
and      contraction  of   equation \eqref{3.1}, leads to 
\begin{equation}\label{3.3}
	T =  (  3p  - \mu ).
\end{equation}   
Use of  equations \eqref{3.2} and \eqref{3.3}, yields 
\begin{equation}\label{3.4}
	R = [ 4\wedge + ( \mu - 3 p) \kappa].
\end{equation}
Next,   we use a  dynamic result for a perfect fluid  spacetime  along a conformal Killing vector field   $\xi^{i}$,  consider the equation \cite{Duggal}  
\begin{equation}\label{3.5}
	\pounds_\xi \mu = -2 \square\varOmega   -2 \varOmega  \mu -2 \varOmega _{;ij} u^{i} u^{j}, 
\end{equation}   
where $\square \varOmega = - \frac{1}{3} \varOmega R  $ and  $\varOmega_{;ij}=-\frac{1}{3}\varOmega R_{ij}$.  In a  perfect fluid spacetime, comparing with EFEs \eqref{1.2} and equation  \eqref{3.1} with the condition $ u_{i} u^{i}= -1$, we get
\begin{equation}\label{3.6}
	R_{ij} u^{i} u^{j} =( \kappa \mu-\frac{R}{2} + \wedge )= [ \kappa (\frac{3p+ \mu}{2}) - \wedge ].
\end{equation} 
If we set
\begin{equation}\label{3.7}
	\varOmega_{;ij} = \frac{\varOmega}{2} [ \frac{R}{3} g_{ij} -2 R_{ij}], 
\end{equation}
then from \cite{Duggal},   every  CIV  is also  a CKV in  an M- projectively flat spacetime.  If we  multiply  both sides by $ u^{i} u^{j}$ in \eqref{3.7}, using equation \eqref{3.6} and $u^{i}$ to be  timelike then we obtain
\begin{equation}\label{3.8}
	\varOmega _{;ij} u^{i} u^{j}= \varOmega  ( \frac{R}{3}- \kappa \mu - \wedge ) = -\frac{\varOmega }{3} [(2\mu+ 3p) \kappa - \wedge ].
\end{equation}
Now   using $\square \varOmega = - \frac{1}{3} \varOmega R  $  and \eqref{3.8} in  \eqref{3.5}, we get 
\begin{equation}\label{3.9}
	\pounds_\xi \mu =  2 \varOmega [ ( \kappa - 1) \mu +\wedge]. 
\end{equation}

\noindent Next,  we find   $\pounds_\xi p$ in M- projectively flat spacetime as follows:

\begin{equation}\label{3.10}
	\pounds_\xi p =  \frac{4}{3} \square~ \varOmega -2\varOmega  p- \frac{2}{3} \varOmega _{;ij} u^{i} u^{j}.
\end{equation}
By using $\square \varOmega = - \frac{1}{3} \varOmega R$ and  \eqref{3.8} in \eqref{3.10}, we get 
\begin{equation}\label{3.11}
	\pounds_\xi p =  2 \varOmega [ ( \kappa - 1)p -  \wedge]. 
\end{equation}
Thus, we have the following result
\begin{theorem}\label{2T2.5}
	If an M-projectively flat spacetime $( V_4, g )$ with  perfect fluid holds  EFEs with a cosmological constant admits the curvature inheritance symmetry along a vector field $\xi$, then we have the following
	\begin{equation*}
		(a)~~~	\pounds_\xi \mu =  2 \varOmega [ ( \kappa - 1) \mu + \wedge]	
	\end{equation*}
	\begin{equation*}
		(b)~~~	\pounds_\xi p =  2 \varOmega [ ( \kappa - 1)p -  \wedge].
	\end{equation*}
\end{theorem}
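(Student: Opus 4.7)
The statement follows essentially by packaging the calculations already laid out in equations \eqref{3.5}--\eqref{3.11}, so the proof plan is to organize those computations cleanly and verify the algebra. For part (a), I would start from the dynamical identity \eqref{3.5}, namely
\begin{equation*}
\pounds_\xi \mu = -2\,\square\varOmega - 2\varOmega\mu - 2\varOmega_{;ij}u^{i}u^{j},
\end{equation*}
and feed in the two scalar contractions that M-projective flatness gives us: the d'Alembertian value $\square\varOmega = -\tfrac{1}{3}\varOmega R$ and the projection of the Hessian \eqref{3.7} onto the timelike flow $u^i$, which by \eqref{3.6} yields \eqref{3.8}. After substitution, the right-hand side is a linear combination of $\varOmega R$, $\varOmega\mu$, $\varOmega p\kappa$, $\varOmega\mu\kappa$ and $\varOmega\wedge$. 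I would then eliminate $R$ using the trace relation \eqref{3.4}, $R = 4\wedge + (\mu-3p)\kappa$, at which point the $p$-terms cancel identically and the remainder collapses to $2\varOmega[(\kappa-1)\mu + \wedge]$, establishing (a).

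Part (b) follows the same recipe starting from \eqref{3.10},
\begin{equation*}
\pounds_\xi p = \tfrac{4}{3}\,\square\varOmega - 2\varOmega p - \tfrac{2}{3}\,\varOmega_{;ij}u^{i}u^{j},
\end{equation*}
with the same substitutions for $\square\varOmega$ and $\varOmega_{;ij}u^{i}u^{j}$ and the same use of \eqref{3.4}. The coefficient bookkeeping is slightly more delicate here because of the $\tfrac{4}{3}$ and $\tfrac{2}{3}$ factors, but the $\mu$-terms cancel and what remains is $2\varOmega[(\kappa-1)p - \wedge]$.

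The only conceptual ingredient that has to be justified beforehand (and which is the real content, rather than the bookkeeping) is the reduction \eqref{3.7} of the Hessian of the inheritance function $\varOmega$ to a combination of $g_{ij}$ and $R_{ij}$, plus the statement (cited from \cite{Duggal}) that in an M-projectively flat spacetime a curvature inheritance vector is automatically a conformal Killing vector. Once these are in hand, both (a) and (b) are pure substitution; the calculation relies on the Einstein condition \eqref{2.2} and the contracted field equation \eqref{3.4} to trade $R$ for matter quantities. The main potential obstacle is keeping the signs and the $\kappa$-factors straight when combining the terms coming from $\square\varOmega$ and from $\varOmega_{;ij}u^i u^j$, since the cancellations that produce the clean final forms are rather tight; one sign slip changes the theorem. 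I would therefore carry out (a) first in full and then use it as a template for (b), checking that the cross-terms in $p$ (resp.\ $\mu$) cancel before simplifying.
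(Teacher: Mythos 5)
Your proposal is correct and follows essentially the same route as the paper: the paper's "proof" is precisely the chain \eqref{3.5}--\eqref{3.11}, substituting $\square\varOmega=-\tfrac{1}{3}\varOmega R$ and the $u^iu^j$-projection of the Hessian \eqref{3.7} (via \eqref{3.6}, giving \eqref{3.8}) into \eqref{3.5} and \eqref{3.10}, and then trading $R$ for matter quantities using \eqref{3.4}. Your bookkeeping checks out (the cross-terms in $p$ and $\mu$ do cancel as you predict), and you correctly flag that the real content lies in the assumed form \eqref{3.7} of $\varOmega_{;ij}$ and the CKV reduction from \cite{Duggal}, which the paper likewise takes as given.
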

\noindent Consider the following special cases:

\noindent $\mathbf{Case ~1:}$  Under the hypothesis of Theorem \ref{2T2.5}, if   $( V_4, g )$ is  a perfect fluid  hold the EFEs  without  a cosmological term (i.e. $\wedge$ = 0), then we have the  following inheritance equations

\begin{equation}\label{3.12}
	(a)~~~	\pounds_\xi \mu =  2 \varOmega  ( \kappa - 1) \mu,
\end{equation}
\begin{equation}\label{3.13}
	(b)~~~	\pounds_\xi p =  2 \varOmega  ( \kappa - 1)p.
\end{equation}

\noindent $\mathbf{Case~2:}$ Under the hypothesis of Theorem \ref{2T2.5}, if   $( V_4, g )$ is a perfect fluid spacetime satisfying the EFEs without  a cosmological term (i.e. $\wedge$ = 0 and $\kappa = 1$ ), then we have the  following  equations

\begin{equation}
	(a)~~~	\pounds_\xi \mu =  0,
\end{equation}
\begin{equation}
	(b)~~~	\pounds_\xi p =   0.
\end{equation}

A spacetime $( V_4, g)$ inherits symmetry along a homothetic vector field $\xi$, if the following  equations hold for a perfect fluid:

$(a)~~~	\pounds_\xi \mu =  2\varOmega \mu$~~~	$(b)~~~	\pounds_\xi p =   2 \varOmega p$~~~  $(c)~~~	\pounds_\xi u^{i} =  2\varOmega u^{i}$.

\noindent Recently, the interest of many researcher has increased towards the study of the spacetime symmetry inheritance along a conformal Killing vector field $\xi$.  Coley and Tupper \cite{AA Coley} have worked on special CKV. Now, it will be demonstrated that how the  Theorem \ref{2T2.5}  is used to change the symmetry equations \eqref{3.12} and \eqref{3.13}  for a curvature inheritance vector field in an  M-projectively flat spacetime. Therefore,  consider  a conformal Killing vector field $\xi$,   such that from \cite{klduggal2} 

\begin{equation}\label{3.16}
	\pounds_\xi u^{i} =  2\varOmega u^{i} + v^{i}
\end{equation}
\noindent where $v^{i}$ is a spacelike vector orthogonal to $u^{i}$. Einstein field equations \eqref{1.2} for a perfect fluid, with \eqref{3.1}, imply that $u^{i}$  eigenvector of $R_{ij}$, and $u^{i}u_{i} = -1$
\begin{equation}\label{3.17}
	R_{ij}  u^{j} = [ - \kappa (\frac{3p+ \mu}{2}) + \wedge ]u_{i}.
\end{equation}
If $\xi$ is the curvature inheritance  vector or Ricci inheritance vector, then using  \eqref{3.17}  in \eqref{3.7}, we get 
\begin{equation}\label{3.18}
	\varOmega _{;ij} u^{i} u^{j} = -\frac{\varOmega }{3} [(2\mu+ 3p) \kappa -  \wedge ].
\end{equation}
Using the above equation in $ (\mu + p) v_{i}$ =  $2 [ \varOmega _{;ij} u^{j}   +	(\varOmega _{;kl} u^{k} u^{l}) u_{i} ]$, we find that $v_{i}
$ = 0 for $(\mu + p)$ $\neq$ 0. Then, equation \eqref{3.16} reduces to 

\begin{equation}\label{3.19}
	\pounds_\xi u^{i} =  2\varOmega u^{i}. 
\end{equation}	

\noindent The calculation of equations   \eqref{3.12}, \eqref{3.13} and \eqref{3.19} can be summarized in the following theorems.
\begin{theorem}
	If an M-projectively flat spacetime with a  perfect fluid obeying the Einstein field equations without a cosmological term admits a curvature inheritance symmetry along the vector field $\xi$, then symmetry with respect to $\xi$ is inherited by the spacetime. 
\end{theorem}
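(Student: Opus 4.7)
The plan is to verify that every invariant of the perfect fluid, namely $\mu$, $p$, $u^{i}$, and consequently $T_{ij}$ itself, satisfies an inheritance law along $\xi$, so that the full physical content of the fluid spacetime is preserved up to a conformal factor. Since the hypotheses assemble exactly the data already used in Section 3 with $\wedge = 0$, the argument proceeds by collecting consequences that have already been derived rather than starting from scratch.

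First, I would invoke Corollary 2.1.4 to conclude that CI in an M-projectively flat spacetime forces the conformal motion relation $\pounds_\xi g_{ij} = 2\varOmega g_{ij}$, so that $\xi$ is in particular a CKV with the same inheriting factor $\varOmega$. Second, specializing Theorem 2.5 to Case 1 with $\wedge = 0$ yields the inheritance laws $\pounds_\xi \mu = 2\varOmega(\kappa-1)\mu$ and $\pounds_\xi p = 2\varOmega(\kappa-1)p$, already recorded as (3.12)--(3.13). Third, for the fluid velocity $u^{i}$ I would use the decomposition (3.16), substitute the CI consequence (3.7), contract with $u^{i}u^{j}$ to recover (3.18), and insert this into the orthogonality identity $(\mu+p)v_{i} = 2[\varOmega_{;ij}u^{j} + (\varOmega_{;kl}u^{k}u^{l})u_{i}]$; under the generic hypothesis $\mu+p \neq 0$, the spacelike residual is forced to vanish, leaving $\pounds_\xi u^{i} = 2\varOmega u^{i}$ as in (3.19). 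Together, these three laws are precisely the three bullet conditions (a), (b), (c) listed just before the theorem that define spacetime symmetry inheritance for a perfect fluid.

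The main obstacle, and really the only nontrivial step, is establishing $v^{i} = 0$; this rests on the perfect-fluid EFE alignment $R_{ij}u^{j} \propto u_{i}$ in (3.17) together with the genericity assumption $\mu+p \neq 0$. If either fails, the conformal motion generated by $\xi$ need not preserve the flow lines and the inheritance structure for $u^{i}$ (and hence for $T_{ij}$) breaks down. Once $v^{i} = 0$ is secured, the remaining work is routine assembly: substituting (3.12), (3.13), (3.19), and the conformal law for $g_{ij}$ into $T_{ij} = (\mu+p)u_{i}u_{j} + pg_{ij}$ shows that $T_{ij}$ itself obeys an inheritance relation, and the theorem follows.
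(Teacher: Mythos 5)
Your proposal matches the paper's own argument: the paper proves this theorem simply by pointing to the preceding computations, namely the conformal motion from Corollary 2.1.4, the $\wedge=0$ specialization (3.12)--(3.13) of Theorem 3.1, and the derivation of $v^{i}=0$ leading to (3.19), which is exactly the chain you assemble. No substantive difference in route or emphasis.
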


\begin{theorem}
	Let $(V_4, g)$ be an M-projectively flat spacetime with a perfect fluid obeying the Einstein field equation and admitting curvature inheritance symmetry along vector field $\xi$. If the perfect fluid denotes radiation era or stiff matter then this spacetime does not contain the cosmological constant.
\end{theorem}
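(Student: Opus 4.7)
The plan is to apply Theorem \ref{2T2.5} directly: under the hypotheses, the Lie derivatives of the energy density and pressure along $\xi$ are prescribed by
\[
\pounds_\xi \mu = 2\varOmega\bigl[(\kappa - 1)\mu + \wedge\bigr], \qquad \pounds_\xi p = 2\varOmega\bigl[(\kappa - 1)p - \wedge\bigr].
\]
The strategy is simply to impose the equation of state and compare. Since an equation of state is an algebraic relation between $\mu$ and $p$ (not involving $\xi$), taking the Lie derivative of that relation immediately gives a linear relation between $\pounds_\xi \mu$ and $\pounds_\xi p$, which I can substitute into the formulas above.

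First I would handle the radiation era, where the equation of state reads $p = \mu/3$, equivalently $\mu = 3p$. Applying $\pounds_\xi$ yields $\pounds_\xi \mu = 3\,\pounds_\xi p$. Substituting the two expressions from Theorem \ref{2T2.5} and dividing by $2\varOmega$ (assuming a proper CI, $\varOmega \neq 0$) gives $(\kappa-1)\mu + \wedge = 3(\kappa-1)p - 3\wedge$. Inserting $\mu = 3p$ causes the $(\kappa-1)$ terms to cancel, leaving $4\wedge = 0$, hence $\wedge = 0$.

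Next I would treat stiff matter, characterized by $p = \mu$, so $\pounds_\xi p = \pounds_\xi \mu$. Equating the two Lie derivative formulas and simplifying, the $(\kappa-1)\mu$ terms cancel and one is left with $-\wedge = \wedge$, i.e.\ $\wedge = 0$. In both cases the cosmological term is forced to vanish, which is precisely the claim.

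The only subtle point, rather than an obstacle, is the assumption $\varOmega \neq 0$ used in dividing out the inheritance function. If $\varOmega = 0$ the curvature inheritance degenerates into a curvature collineation and both Lie derivatives vanish identically, so the conclusion is trivially consistent but carries no information about $\wedge$; the theorem is therefore properly understood in the proper CI setting (as is standard in the previous results of the paper), and I would mention this explicitly in the final write-up.
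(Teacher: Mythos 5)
Your proposal is correct and follows essentially the same route as the paper's own proof: Lie-differentiate the equation of state $\mu=3p$ (resp.\ $\mu=p$), substitute the formulas of Theorem \ref{2T2.5}, and cancel to get $\wedge=0$. Your explicit remark that one must assume $\varOmega\neq 0$ (proper curvature inheritance) before dividing is a worthwhile clarification that the paper leaves implicit.
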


\begin{proof}
	Let us assume that the perfect fluid denotes the radiation era, i.e., $\mu = 3p$. By taking the Lie derivative of this equation and using \eqref{3.9} and \eqref{3.11}, we can see that $\wedge$ = 0. Similarly, if the perfect fluid denotes stiff matter, i.e., $\mu = p$, then we also get $\wedge = 0$. This completes the proof.
\end{proof}

\begin{theorem}
	Let $(V_4, g)$ be an M-projectively flat spacetime with a perfect fluid obeying the Einstein field equations and admitting curvature inheritance symmetry along vector field $\xi$. If the Lie derivative of   $ (\mu + p )$ is zero, then either the matter content of the spacetime satisfying  the vacuum-like equation of state or  $ (\mu + p )$ is constant. 
\end{theorem}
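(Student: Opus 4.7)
The plan is to reduce the statement to a single algebraic identity by combining parts $(a)$ and $(b)$ of Theorem \ref{2T2.5}. First I would use linearity of the Lie derivative to write
\begin{equation*}
\pounds_\xi(\mu + p) = \pounds_\xi \mu + \pounds_\xi p = 2\varOmega\bigl[(\kappa-1)\mu + \wedge\bigr] + 2\varOmega\bigl[(\kappa-1)p - \wedge\bigr],
\end{equation*}
and observe that the two $\wedge$-contributions cancel exactly, leaving $\pounds_\xi(\mu+p) = 2\varOmega(\kappa-1)(\mu+p)$. Thus the entire theorem is reduced to inspecting where a single product vanishes.

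Next, imposing the hypothesis $\pounds_\xi(\mu+p)=0$ together with the physically generic assumption $\kappa-1\neq 0$, I would carry out a pointwise case analysis of the equation $\varOmega(\kappa-1)(\mu+p)=0$. At any point of $V_4$, either $\mu+p=0$, so that by \eqref{3.1} the stress-energy reduces to $T_{ij} = -\mu\, g_{ij}$, i.e.\ the fluid satisfies the vacuum-like equation of state $p = -\mu$ (the first alternative); or else $\varOmega = 0$. In the latter case the curvature inheritance collapses to a curvature collineation, and substituting $\varOmega = 0$ back into the two formulas of Theorem \ref{2T2.5} yields $\pounds_\xi\mu = 0$ and $\pounds_\xi p = 0$ \emph{individually}. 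In particular $\mu+p$ is invariant along the flow of $\xi$, which is the intended sense of ``$(\mu+p)$ is constant''.

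I expect the main subtlety to lie in globalising this pointwise dichotomy: a priori $V_4$ could split into two regions, one on which $\mu+p$ vanishes and one on which $\varOmega$ vanishes, glued along a common boundary where both vanish. A short continuity-and-connectedness argument (or the auxiliary hypothesis that the inheritance is proper, $\varOmega\not\equiv 0$) is needed to conclude that exactly one of the two alternatives holds throughout $V_4$. The degenerate regime $\kappa=1$ is harmless but worth flagging, since there the identity reduces to $0=0$ and the desired conclusion that $\mu+p$ is $\xi$-invariant follows immediately from the hypothesis itself.
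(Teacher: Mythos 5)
Your proposal is correct and follows essentially the same route as the paper: it adds \eqref{3.9} and \eqref{3.11} so that the cosmological terms cancel, and reads the conclusion off the resulting identity $\pounds_\xi(\mu+p)=2\varOmega(\kappa-1)(\mu+p)$. Your writeup is in fact more explicit than the paper's own two-line argument, which never displays this product and does not address the $\varOmega=0$ alternative, the degenerate case $\kappa=1$, or the pointwise-versus-global distinction that you rightly flag.
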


\begin{proof}
	By using \eqref{3.9} and \eqref{3.11}, we get 
	\begin{equation}
		\pounds_\xi (\mu + p) = \pounds_\xi \mu + \pounds_\xi p = 0.
	\end{equation}
	Finally, we can say that either the  matter content of the spacetime satisfy the vacuum like equation of state or  $ (\mu + p )$ is constant.   Thus, the   proof is completed.
\end{proof}

\section{\bf Equation of State}
Let $(V_4, g)$ be  an  M-projectively flat spacetime admitting curvature inheritance  along a vector field $\xi$. From Corollary \ref{2.1.4}, $\xi$  is a CKV,  satisfying the following equation
\begin{equation}\label{4.1}
	(\xi^{j} R_{ij})_{;i} = -3 \square \psi.
\end{equation}
Using Einstein field equations,  \eqref{4.1} leads to

\begin{equation}\label{4.2}
	([k T^{ij}  - (\wedge -\frac{R}{2}) g^{ij}  ]\xi_{j}) _{;i} = -3 \square \psi.
\end{equation}
As  $-3 \square \psi = \alpha R$, therefore  
\begin{equation}\label{4.3}
	([k T^{ij}  - (\wedge -\frac{R}{2}) g^{ij}  ]\xi_{j}) _{;i} = \alpha R.
\end{equation}
Further, for the perfect fluid spacetime with  \eqref{3.1}, we have 
\begin{equation}\label{4.5}
	[k p -\wedge + \frac{R}{2} ]\xi^{i} _{;i} = \alpha R.
\end{equation}
Equation \eqref{4.5},  on substitution   $\xi^{i}_{;i}$ = $4\alpha$ reduces to
\begin{equation}\label{4.6}
	4 (k p - \wedge) = - R.
\end{equation}
Here $\xi$ is CKV. For an  M-projectively flat spacetime,  the  scalar curvature  $ R = 4\wedge - k (3p - \mu) $. Now comparing  with \eqref{4.6}, we have 
\begin{equation}\label{4.7}
	\mu + p = 0,  ~~~~(k\neq 0)
\end{equation}

\noindent which may describe two cases (i) empty spacetime $\mu$ = $p$ =0  and (ii)  perfect fluid spacetime  holding  vacuum like equation of state.

\noindent Thus,  we state the  result as follows

\begin{theorem}
	
	Let an    M-projectively flat spacetime with perfect fluid obeying the EFEs with cosmological constant be admit  curvature inheritance symmetry along a  vector field $\xi$. Then $(V_4, g)$  is either a vacuum or satisfies the vacuum-like equation of state.	 
\end{theorem}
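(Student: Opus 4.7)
The plan is to use the chain of consequences already established in the paper: curvature inheritance in an M-projectively flat spacetime forces $\xi$ to be a conformal Killing vector (via Corollary \ref{2.1.4}), so the standard CKV divergence identity can be applied to the Ricci tensor, and everything collapses onto a single algebraic constraint on $\mu$ and $p$. Concretely, I would start from the identity $(\xi^{j} R_{ij})_{;i} = -3\,\square\psi$ valid for any CKV in a four-dimensional spacetime, where $\psi$ is the conformal factor; by Corollary \ref{2.1.4} this conformal factor coincides with the inheritance function $\varOmega$.

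Next I would eliminate the Ricci tensor using the Einstein field equations \eqref{1.2}, writing $R_{ij} = \kappa T_{ij} + (\tfrac{R}{2} - \wedge) g_{ij}$, and substitute. Taking the divergence and using $T^{ij}{}_{;i} = 0$ together with the perfect fluid form \eqref{3.1} of $T_{ij}$ leaves an expression involving $\xi^{i}{}_{;i}$ only. For a CKV in $V_4$ one has $\xi^{i}{}_{;i} = 4\psi$ and $\square\psi = -\tfrac{1}{3}\psi R$ (which, combined, give the $-3\square\psi = \alpha R$ relation with $\alpha=\psi$ used in \eqref{4.3}). Plugging these in, I expect to reach exactly \eqref{4.6}, namely
\begin{equation*}
4(\kappa p - \wedge) = -R.
\end{equation*}

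Finally, I would invoke the scalar curvature formula derived earlier for the M-projectively flat perfect fluid case, namely $R = 4\wedge - \kappa(3p - \mu)$ from \eqref{3.4}, and equate with \eqref{4.6}. The $\wedge$ terms and most of the $p$ terms cancel, leaving $\kappa(\mu + p) = 0$. Since $\kappa \neq 0$, this forces $\mu + p = 0$, which is precisely the vacuum-like equation of state; the special sub-case $\mu = p = 0$ gives a vacuum. This is what the theorem claims.

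The main obstacle is not the algebra (which is brief once the substitutions are in place) but rather justifying the intermediate CKV identities, in particular $(\xi^{j}R_{ij})_{;i} = -3\,\square\psi$ and $\xi^{i}{}_{;i} = 4\psi$, in a self-contained way, and being careful that the $\psi$ appearing there is the same inheritance function $\varOmega$ that Corollary \ref{2.1.4} provides. Once these are pinned down, the proof is a three-line computation comparing \eqref{4.6} with \eqref{3.4}.
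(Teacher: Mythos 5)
Your proposal is correct and follows essentially the same route as the paper's own Section 4 argument: Corollary \ref{2.1.4} gives that $\xi$ is a CKV, the identity $(\xi^{j}R_{ij})_{;i}=-3\square\psi$ combined with the EFEs and $\xi^{i}{}_{;i}=4\alpha$ yields \eqref{4.6}, and comparison with \eqref{3.4} gives $\mu+p=0$. The only caveat you already flag yourself --- justifying the CKV identities and the identification of $\psi$ with $\varOmega$ --- is likewise left implicit in the paper, so nothing further is needed.
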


From \cite{KarmerD}, we note that    $\mu +  p = 0 $ implies that scalar curvature is equal to cosmological constant   and Ojha \cite{RH1}  termed  this  as {\it Phantom~ Barrier}.  Alan Guth in 1981 proposed the idea of {\it cosmic inflation} \cite{Guth}, and explained the similar conditions for the same in the universe. Amendola and Tsujikawa \cite{LS} explained the term inflation  in their  paper   as rapid expansion of the spacetime that might occurred just after the {\it Big~ Bang}. In the light of above discussion, we have

\begin{theorem}
	Let a perfect fluid M-projectively flat spacetime  $(V_4, g)$ obeying the EFEs with cosmological constant, admits curvature inheritance symmetry along a  vector field $\xi$. Then $(V_4, g)$   represents inflation.
\end{theorem}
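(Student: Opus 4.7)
The plan is to reduce the statement to the conclusion of the preceding theorem and then invoke the cosmological interpretation already recalled in the surrounding discussion. First I would apply the previous theorem directly: under the hypotheses of an M-projectively flat perfect fluid spacetime that obeys the EFEs with cosmological term and admits curvature inheritance along $\xi$, one of the two alternatives $\mu = p = 0$ (vacuum) or $\mu + p = 0$ (vacuum-like equation of state) holds. The vacuum case is the degenerate limit and can be treated simultaneously, so the substantive case is $p = -\mu$.

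In that case I would substitute $p = -\mu$ into the scalar-curvature identity $R = 4\wedge + \kappa(\mu - 3p)$ derived earlier for the M-projectively flat perfect fluid, and compare with equation \eqref{4.6}. The two expressions should collapse to the single relation $R = 4\wedge$, i.e.\ the scalar curvature is fixed by the cosmological term alone. This is precisely the \emph{phantom barrier} condition attributed to Ojha \cite{RH1} in the discussion preceding the statement, and it is the geometric signature of a de~Sitter-type background.

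The last step is interpretational rather than computational: by the works of Guth \cite{Guth} and Amendola--Tsujikawa \cite{LS} cited in the paragraph above the theorem, a perfect fluid with vacuum-like equation of state sourcing the Einstein equations with a positive cosmological term drives the rapid, quasi-exponential expansion that characterises cosmic inflation. Chaining these facts together yields the theorem, with the vacuum sub-case covered because pure $\wedge$-dominated spacetime is itself de~Sitter.

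The main obstacle I anticipate is not analytic but conceptual. The implication ``vacuum-like equation of state $\Longrightarrow$ inflation'' is a physical identification rather than a strict mathematical deduction, so the proof must terminate by appealing to the cited cosmological literature rather than by an independent computation. Care is also needed to phrase the vacuum alternative so that it is consistently included in the inflationary conclusion rather than appearing as an exceptional case.
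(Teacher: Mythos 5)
Your proposal follows the paper's own route: the paper gives no separate computation for this theorem, obtaining it from the preceding result $\mu + p = 0$ (vacuum or vacuum-like equation of state) together with the cited phantom-barrier/inflation literature (Stephani et al., Guth, Amendola--Tsujikawa), exactly as you do. The only small slip is that substituting $p=-\mu$ into $R = 4\wedge + \kappa(\mu-3p)$ gives $R = 4(\wedge + \kappa\mu)$, i.e.\ the scalar curvature equals an \emph{effective} cosmological constant $\wedge-\kappa p$ rather than $4\wedge$ itself; this matches the paper's own (equally loose) phrasing and does not affect the interpretational conclusion.
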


\noindent From equations \eqref{3.1} and \eqref{4.7}, we obtain 
\begin{equation}\label{4.8}
	T_{ij} = p g_{ij}.
\end{equation}
Since, an M-projectively flat spacetime is Einstein, thus it is  of constant scalar curvature. Further as  $\wedge$ and $k$ are  constants, therefore  equation  \eqref{4.6} confirms that $p$ is constant. Hence $ \mu  = - p $ is also constant.  This condition has a special
significance in the physics of spacetime. With this condition, the fluid starts behaving like a cosmological constant \cite{KarmerD}, which is also called Phantom
Barrier \cite{SNR}. This causes inflation of the universe \cite{LS}.

\noindent Thus, we obtain the following theorem:
\begin{theorem}
	The isotropic pressure and energy density of an M-projectively flat perfect fluid spacetime satisfying EFEs with cosmological constant  admitting the curvature inheritance symmetry along a vector field $\xi$ are constants. 
\end{theorem}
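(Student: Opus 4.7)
The plan is to read the constancy of $p$ and $\mu$ directly off equations \eqref{4.6} and \eqref{4.7}, using the fact (already established in Section 2) that an M-projectively flat spacetime is a space of constant curvature. Nothing new needs to be computed; the content of the theorem is essentially a clean bookkeeping of constants.

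First I would recall that by \eqref{2.3} (and the result of Zengin cited just after it) the spacetime $(V_4,g)$ is of constant curvature, so the scalar curvature $R$ is a (pointwise) constant on $V_4$. The cosmological term $\wedge$ and the gravitational constant $k$ are constants by hypothesis. Rearranging \eqref{4.6} as
\begin{equation*}
  p \;=\; \frac{1}{k}\Bigl(\wedge - \frac{R}{4}\Bigr),
\end{equation*}
the right-hand side is then a combination of constants, so $p$ is a constant on $V_4$.

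Next I would apply \eqref{4.7}, which under the standing assumption $k \neq 0$ yields $\mu = -p$. Since $p$ has just been shown to be constant, $\mu$ is constant as well. Combining the two conclusions proves the theorem.

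There is essentially no obstacle: the work was done in the preceding sections, where Einstein-ness of M-projectively flat spacetimes (Theorem~\ref{4T2.1} and the surrounding discussion) gave constant $R$, and the CKV/EFE calculation gave \eqref{4.6}--\eqref{4.7}. The only care required is to keep track of the hypotheses $k \neq 0$ and that $\wedge$ is the usual cosmological constant (not a function), both of which are standing assumptions throughout the paper.
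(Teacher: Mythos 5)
Your proposal is correct and follows essentially the same route as the paper: both arguments note that $R$ is constant (the paper via Einstein-ness, you via constant curvature, which amounts to the same thing here), read the constancy of $p$ off equation \eqref{4.6} since $\wedge$ and $k$ are constants, and then conclude $\mu=-p$ is constant from \eqref{4.7}. No gaps.
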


\section{\bf A Purely Electromagnetic Distribution Admitting Curvature Inheritance Symmetry}
Section $5$  is devoted to the study of   few results for  purely electromagnetic distribution. Contraction of \eqref{4.8},  provides
\begin{equation}\label{5.1}
	T = 4 p.
\end{equation}

\noindent Using the equations \eqref{5.1} and \eqref{3.3}, we get  $p$ = $-\mu$ = $\frac{T}{4}.$ Now since for a purely electromagnetic distribution, $T = 0$.  We have  $p$ = $\mu$ = 0, in other words  \textquotedblleft In a purely electromagnetic distribution, the isotropic pressure and the energy density of a perfect fluid spacetime satisfying Einstein field equations with cosmological constant  vanish."\\ 

\noindent Putting $\mu$ = $p$ = 0 in \eqref{3.1},  we get $T_{ij}$ = 0.  This implies that the spacetime is vacuum. Taking contraction of \eqref{1.2}, we get
\begin{equation}\label{5.6}
	R = 4 \wedge - k T,
\end{equation}
substituting  \eqref{5.6} in EFEs  \eqref{1.2} and   for purely electromagnetic distribution, we get
\begin{equation}\label{5.7}
	R_{ij} = \wedge~ g_{ij}.
\end{equation}
\noindent If  M-projectively flat    perfect fluid spacetime  $(V_4, g)$    admits curvature inheritance symmetry along $\xi$. Then Lie derivative of  equation \eqref{5.7}  leads  to a conformal motion.  Thus,  we may  state  the following result:
\begin{theorem}
Th	curvature inheritance symmetry in  an M-projectively flat perfect fluid spacetime for a purely electromagnetic distribution, which satisfies   EFEs with  a cosmological  term is  conformal motion.
\end{theorem}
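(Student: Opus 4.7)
The plan is to combine the Ricci structure forced by the purely electromagnetic hypothesis with the Ricci inheritance property implied by curvature inheritance. Starting point: since the distribution is purely electromagnetic, $T=0$, and equation \eqref{5.7} gives the Einstein-like identity $R_{ij}=\wedge\,g_{ij}$. This is the algebraic relation that will transport the symmetry from $R_{ij}$ back to $g_{ij}$.

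Next I would take the Lie derivative of \eqref{5.7} along $\xi$. Because the cosmological constant $\wedge$ and the gravitational constant $k$ are genuine constants, $\pounds_\xi\wedge=0$, and we simply obtain
\begin{equation*}
\pounds_\xi R_{ij}=\wedge\,\pounds_\xi g_{ij}.
\end{equation*}
On the other hand, by hypothesis the spacetime admits curvature inheritance along $\xi$, that is \eqref{1.5N} holds. Contracting as in \eqref{1.7N} (recall from the paper that CI $\Rightarrow$ RI), we get $\pounds_\xi R_{ij}=2\varOmega R_{ij}$, and using \eqref{5.7} again this becomes
\begin{equation*}
\pounds_\xi R_{ij}=2\varOmega\wedge\,g_{ij}.
\end{equation*}
Equating the two expressions for $\pounds_\xi R_{ij}$ and assuming $\wedge\neq 0$ yields $\pounds_\xi g_{ij}=2\varOmega\,g_{ij}$, which is precisely the defining equation \eqref{1.4} of a conformal motion with conformal factor $\varOmega$.

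The only delicate point is the degenerate case $\wedge=0$. If $\wedge=0$ then \eqref{5.7} reduces to $R_{ij}=0$, so the RI equation is trivially satisfied and the above elimination fails; however, in that situation the spacetime is Ricci-flat and the conclusion can be recovered by appealing directly to Corollary \ref{2.1.4}, which asserts that every curvature inheritance in an M-projectively flat spacetime is already a conformal motion with conformal function $\varOmega$. Thus the main obstacle is not the algebra, which is essentially one line, but making the logical structure clean: one has to point out that $\wedge$ is constant so it commutes with $\pounds_\xi$, invoke CI $\Rightarrow$ RI to push the symmetry through \eqref{5.7}, and handle the $\wedge=0$ branch via the earlier corollary so that the statement holds without case assumptions.
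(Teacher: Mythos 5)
Your proposal is correct and follows essentially the same route as the paper: Lie-differentiate the relation $R_{ij}=\wedge\,g_{ij}$ obtained from $T=0$, invoke CI $\Rightarrow$ RI to replace $\pounds_\xi R_{ij}$ by $2\varOmega R_{ij}=2\varOmega\wedge\,g_{ij}$, and cancel $\wedge$ to land on $\pounds_\xi g_{ij}=2\varOmega g_{ij}$. Your explicit treatment of the degenerate case $\wedge=0$ via Corollary \ref{2.1.4} is a small but genuine improvement, since the paper leaves that cancellation implicit and only addresses $\wedge=0$ separately in the subsequent corollary.
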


\noindent In this case, we choose Einstein  field equations without cosmological constant takes the form
\begin{equation}\label{5.2}
	R_{ij} - \frac{R}{2} g_{ij} =  k ~  T_{ij}.
\end{equation}
\noindent Contracting \eqref{5.2} we get
\begin{equation}\label{5.3}
	R = - k T. 
\end{equation}
\noindent For purely electromagnetic distribution $T$ = 0,  \eqref{5.3} reduces to $R$ = 0. Therefore, from \eqref{2.3},  we obtain 
\begin{equation}
	R^h_{ijk} = 0.
\end{equation}
This implies that $(V_{4}, g)$ is an Euclidean space. Thus, we have  the following result:
\begin{corollary}
	In a purely electromagnetic distribution, an M-projectively
	flat perfect fluid spacetime  $(V_4, g)$  satisfying   EFEs without a cosmological constant does not admit any curvature symmetry.\\
\end{corollary}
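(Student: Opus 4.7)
The plan is to combine the M-projective flatness identity from Section 2 with the traceless property of the electromagnetic energy-momentum tensor and deduce that the Riemann tensor must vanish identically. Concretely, I would proceed in three short steps.

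First, I would contract the cosmology-free Einstein equations \eqref{5.2} with $g^{ij}$ to obtain the scalar relation \eqref{5.3}, namely $R = -kT$. Next, I would invoke the physical hypothesis: for a purely electromagnetic distribution the stress-energy tensor is tracefree, so $T=0$, which forces $R=0$.

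With vanishing scalar curvature in hand, I would substitute $R=0$ into the M-projectively flat curvature expression \eqref{2.3}. Because the whole right-hand side of \eqref{2.3} carries the factor $R/12$, it collapses to zero and one concludes $R^{h}_{ijk}\equiv 0$. Hence $(V_4,g)$ is locally flat. At this point the curvature inheritance condition \eqref{1.5N} reads $0 = 2\Omega\cdot 0$, which imposes no restriction on $\xi$ or on the inheriting factor $\Omega$: there is no non-trivial curvature structure left to be symmetric about, so in the sense used throughout the paper no proper curvature symmetry exists.

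The main obstacle is interpretive rather than computational: one must carefully articulate the meaning of ``does not admit any curvature symmetry'' on a Riemann-flat background, because \eqref{1.5N} is technically satisfied by every vector field there. The reading consistent with the preceding sections of the paper is that the curvature tensor itself is trivial, so any would-be CI or CC equation is vacuous. Once this interpretive point is settled, the proof collapses to the single trace of \eqref{5.2} combined with $T=0$ and the one-line substitution into \eqref{2.3}.
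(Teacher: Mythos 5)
Your argument is correct and follows the paper's own proof essentially verbatim: trace the cosmology-free field equations to get $R=-kT$, use $T=0$ for the electromagnetic distribution to force $R=0$, and substitute into \eqref{2.3} to conclude $R^h_{ijk}=0$. Your added remark on how to interpret ``does not admit any curvature symmetry'' on a Riemann-flat background is a fair observation the paper leaves implicit, but the mathematical content is identical.
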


\section{\bf Example of an M-Projectively Flat Spacetime}

Let $\mathbb{R}^4$ be equipped with the following metrics in terms of coordinates $x^1$, $x^2$, $x^3$ and $x^4$ given  as follows:
\begin{equation}\label{6.1} 
	ds^2=-(dx^{1})^{2}+e^{x^1}[(dx^{2})^{2}+(dx^{3})^{2}+(dx^{4})^{2}],
\end{equation}
\begin{equation}\label{6.2} 
	ds^2=(dx^{1})^{2}+e^{x^1}[(dx^{2})^{2}+(dx^{3})^{2}+(dx^{4})^{2}].
\end{equation}
We note that the metric in (\ref{6.1}) (resp. (\ref{6.2})) is a Lorentzian (resp. Riemannian) metric.
\index The non-vanishing  components of second kind   Christoffel symbols   ($\Gamma^h_{ij}$) of the metric \eqref{6.1} are given by  
$$\begin{array}{c}
	
	\Gamma^2_{12}=\Gamma^3_{13}=\Gamma^4_{14}=\frac{1}{2}, \
	\Gamma^1_{22}=\Gamma^1_{33}=\Gamma^1_{44}=\frac{e^{x^1}}{2}. 	
\end{array}$$

\indent The non-vanishing  components of the Ricci tensor $R_{ij}$ and   Riemann-Christoffel curvature tensor $R_{hijk}$ of the  metric \eqref{6.1} are given by 

$$\begin{array}{c}
	R_{1212}=R_{1313}=R_{1414}=-\frac{e^{x^1}}{4}, \
	R_{2323}=R_{2424}=R_{3434}=\frac{e^{2x^1}}{4},
\end{array}$$

$$\begin{array}{c}
	R_{11}=\frac{3}{4},\
	
	R_{22}=R_{33}=R_{44}= -\frac{3e^{x^1}}{4}. \\
\end{array}$$

Also, for the metric \eqref{6.1}, the scalar curvature is   $R=-3$. It is easy to compute that all the components of the M-projective curvature tensor are zero. Thus we have the  following result:

\begin{theorem}
	Let $\mathbb R^4$ be equipped with  the Lorentzian metric given in \eqref{6.1}, then $(\mathbb R^4,g)$ is an M-projectively flat spacetime.
\end{theorem}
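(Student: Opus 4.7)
The plan is to exploit the equivalences already established in Section 2: by (2.2) and (2.3), an M-projectively flat spacetime is simultaneously Einstein with $R_{ij}=\tfrac{R}{4}g_{ij}$ and of constant curvature with $R^{h}_{ijk}=\tfrac{R}{12}(g^{h}_{k}g_{ij}-g^{h}_{j}g_{ik})$, and conversely these two conditions together force $\bar{W}^{h}_{ijk}=0$. So the task reduces to verifying, on the metric \eqref{6.1}, that the Einstein condition and the constant-curvature form of the Riemann tensor both hold. The raw data needed (Christoffel symbols, nonzero Riemann components $R_{hijk}$, Ricci components $R_{ij}$, and the scalar curvature $R=-3$) is exactly what the excerpt has already tabulated, so the proof becomes a bookkeeping check.

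First I would read off the Einstein condition directly from the tabulated Ricci tensor: with $g_{11}=-1$ and $g_{22}=g_{33}=g_{44}=e^{x^{1}}$, one has $R_{11}/g_{11}=-3/4$ and $R_{\alpha\alpha}/g_{\alpha\alpha}=-3/4$ for $\alpha=2,3,4$, so $R_{ij}=-\tfrac{3}{4}g_{ij}=\tfrac{R}{4}g_{ij}$, confirming (2.2). Next I would verify the constant-curvature form component by component. For instance, the formula $R_{hijk}=\tfrac{R}{12}(g_{hk}g_{ij}-g_{hj}g_{ik})$ with $R=-3$ predicts $R_{1212}=-\tfrac{1}{4}\bigl(g_{12}g_{21}-g_{11}g_{22}\bigr)=-\tfrac{e^{x^{1}}}{4}$ and $R_{2323}=-\tfrac{1}{4}\bigl(g_{23}g_{32}-g_{22}g_{33}\bigr)=\tfrac{e^{2x^{1}}}{4}$, matching the tabulated values; the remaining nonzero components $R_{1313},R_{1414},R_{2424},R_{3434}$ are checked identically, and all components not related by the Riemann symmetries to these are zero on both sides because the metric is diagonal with only $g_{11}$ and $g_{\alpha\alpha}$ nonzero.

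Once both conditions are in place, I would raise an index to get $R^{h}_{j}=g^{hl}R_{lj}=\tfrac{R}{4}\delta^{h}_{j}$, substitute into the defining identity
\begin{equation*}
\bar{W}^{h}_{ijk}=R^{h}_{ijk}+\tfrac{1}{6}\bigl(\delta^{h}_{j}R_{ik}-\delta^{h}_{k}R_{ij}+R^{h}_{j}g_{ik}-R^{h}_{k}g_{ij}\bigr),
\end{equation*}
and observe that the bracketed expression collapses to $\tfrac{R}{2}(\delta^{h}_{j}g_{ik}-\delta^{h}_{k}g_{ij})$, so the correction term becomes $\tfrac{R}{12}(\delta^{h}_{j}g_{ik}-\delta^{h}_{k}g_{ij})$, which exactly cancels $R^{h}_{ijk}$ by the constant-curvature formula. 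Hence $\bar{W}^{h}_{ijk}\equiv 0$. The only real obstacle is the indexing discipline in step two (making sure that the sign conventions in (2.3), the symmetry $R_{hijk}=-R_{hikj}=R_{jkhi}$, and the definition of $R_{ij}$ via contraction of $R^{h}_{ijk}$ are used consistently with the scalar curvature value $R=-3$ reported by the authors); once the sign is pinned down on one component such as $R_{1212}$, every other check is mechanical.
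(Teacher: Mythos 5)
Your proposal is correct and follows essentially the same route as the paper: the paper tabulates the Christoffel symbols, the Riemann and Ricci components, and $R=-3$ for the metric \eqref{6.1}, and then simply asserts that all components of $\bar{W}^{h}_{ijk}$ vanish. Your verification that the tabulated data satisfy $R_{ij}=\tfrac{R}{4}g_{ij}$ and $R_{hijk}=\tfrac{R}{12}(g_{hk}g_{ij}-g_{hj}g_{ik})$, followed by the algebraic cancellation in the definition of $\bar{W}^{h}_{ijk}$, is just a cleaner, fully explicit way of carrying out the component check the paper leaves to the reader.
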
 

\begin{remark}
	We note that $(\mathbb R^4,g)$ endowed with the Riemannian metric \eqref{6.2} is also M-projectively flat.
\end{remark}

\begin{remark}
	We mention that the metrics \eqref{6.1} and \eqref{6.2} are warped product $M_1\times_f M_2$, where the warping function $f=e^{x^1}$ with $1$-dimensional base $M_1$ and $3$-dimensional fibre $M_2$.
\end{remark}

\end{document}